\def \setC{\mathcal{C}}
\def \setD{\mathcal{D}}
\def \setG{\mathcal{G}}
\def \setN{\mathcal{N}}
\def \setS{\mathcal{S}}
\def \setV{\mathcal{V}}
\def \setE{\mathcal{E}}
\def\BibTeX{{\rm B\kern-.05em{\sc i\kern-.025em b}\kern-.08emT\kern-.1667em\lower.7ex\hbox{E}\kern-.125emX}}
\begin{document}
\fancyhead{}

\title{Mining Large Quasi-cliques with Quality Guarantees from Vertex Neighborhoods}

\author{Aritra Konar}
\affiliation{%
  \institution{University of Virginia}
  \city{Charlottesville}
  \state{Virginia}
  \country{USA}
}
\email{aritra@virginia.edu}

\author{Nicholas D. Sidiropoulos}
\affiliation{%
  \institution{University of Virginia}
  \city{Charlottesville}
  \state{Virginia}
  \country{USA}
}
\email{nikos@virginia.edu}

%
\begin{abstract}
 Mining dense subgraphs is an important primitive across a spectrum of graph-mining tasks. In this work, we formally establish that two recurring characteristics of real-world graphs, namely heavy-tailed degree distributions and large clustering coefficients, imply the existence of substantially large vertex neighborhoods with high edge-density. This observation suggests a very simple approach for extracting large quasi-cliques: simply scan the vertex neighborhoods, compute the clustering coefficient of each vertex, and output the best such subgraph. The implementation of such a method requires counting the triangles in a graph, which is a well-studied problem in graph mining. When empirically tested across a number of real-world graphs, this approach reveals a surprise: vertex neighborhoods include maximal cliques of non-trivial sizes, and the density of the best neighborhood often compares favorably to subgraphs produced by dedicated algorithms for maximizing subgraph density.
 For graphs with small clustering coefficients, we demonstrate that small vertex neighborhoods can be refined using a local-search method to ``grow'' larger cliques and near-cliques. Our results indicate that contrary to worst-case theoretical results, mining cliques and quasi-cliques of non-trivial sizes from real-world graphs is often not a difficult problem, and provides motivation for further work geared towards a better explanation of these empirical successes.  
\end{abstract}

\begin{CCSXML}
<ccs2012>
<concept>
<concept_id>10002950</concept_id>
<concept_desc>Mathematics of computing</concept_desc>
<concept_significance>500</concept_significance>
</concept>
<concept>
<concept_id>10002950.10003624</concept_id>
<concept_desc>Mathematics of computing~Discrete mathematics</concept_desc>
<concept_significance>500</concept_significance>
</concept>
<concept>
<concept_id>10002950.10003624.10003633</concept_id>
<concept_desc>Mathematics of computing~Graph theory</concept_desc>
<concept_significance>500</concept_significance>
</concept>
<concept>
<concept_id>10002950.10003624.10003625</concept_id>
<concept_desc>Mathematics of computing~Combinatorics</concept_desc>
<concept_significance>300</concept_significance>
</concept>
</ccs2012>
\end{CCSXML}

\ccsdesc[500]{Mathematics of computing}
\ccsdesc[500]{Mathematics of computing~Discrete mathematics}
\ccsdesc[500]{Mathematics of computing~Graph theory}
\ccsdesc[300]{Mathematics of computing~Combinatorics}

\keywords{Quasi-cliques; clustering coefficients; triangles; neighborhoods}

%
\maketitle

\section{Introduction}
\noindent \textbf{Motivation and Overview:}
The task of extracting dense subgraphs from a given graph constitutes a key primitive in graph mining, with applications ranging from graph compression \cite{buehrer2008scalable}, to discovering protein complexes in protein-protein interaction networks \cite{bader2003automated,prvzulj2004functional}, to identifying spam farms in Web graphs \cite{gibson2005discovering,hooi2016fraudar}, and event detection in network streams \cite{angel2012dense,cadena2016dense}. 

Depending on the particular metric employed for quantifying subgraph density, various formulations have been proposed for extracting different classes of dense subgraphs. The archetypal dense subgraph is a clique, i.e., a subgraph where every pair of vertices share an edge. A clique is said to be maximal if it isn't included within a larger clique, and the largest such clique is the maximum clique of a graph. The set of all maximal cliques in a graph can be listed using the classic Bron-Kerbosch algorithm \cite{bron1973algorithm}, albeit at  exponential worst-case complexity. Meanwhile, the problem of extracting the maximum clique is NP--hard \cite{garey2002computers}-- even for power-law graphs \cite{ferrante2008hardness}. 

Consequently, a different line of work has focused on developing less stringent, polynomial-time formulations for mining dense subgraphs. The seminal work of Goldberg \cite{goldberg1984finding} established that the problem of finding the subgraph with maximum average degree (widely known as the \textsc{DensestSubgraph} problem) can be solved via a sequence of maximum-flow problems. Follow-up work by Charikar \cite{charikar2000greedy} showed that a simple greedy vertex-peeling algorithm, that runs in linear-time, provides a $1/2$ approximation for  the problem and is near-optimal in practice. However, it was pointed out in \cite{tsourakakis2013denser} that adopting such a metric in practice can potentially yield the entire graph as the densest subgraph. As a result, Tsourakakis \cite{tsourakakis2015k} introduced the more general problem of finding the subgraph which maximizes the average number of induced $k$-cliques (known as the $k$-\textsc{CliqueDensestSubgraph} problem), and provided exact flow-based algorithms and greedy approximation algorithms for the task. It was also shown that this approach yields smaller, denser subgraphs compared to \textsc{DensestSubgraph}.

Another line of work utilizes a different relaxation of the notion of a clique, known as quasi-cliques, to find dense subgraphs. Formally, a $\alpha$-quasi-clique is a subgraph with edges greater than a fixed fraction $\alpha \in (0,1)$ of the edges in a clique of the same size. Recently, Tsourakakis \emph{et al.} introduced the \textsc{OptimalQuasiClique} (OQC) formulation in \cite{tsourakakis2013denser} for mining quasi-cliques possessing a large number of edges with respect to a random null model. The OQC problem is not known to be NP--hard; however, to the best of our knowledge, it does not admit an exact solution in polynomial-time either. Tsourakakis \emph{et al.} \cite{tsourakakis2013denser} proposed a simple greedy vertex-peeling algorithm (\textsc{GreedyOQC}) and a local-search method (\textsc{LocalSearchOQC}) for extracting approximate solutions for the problem, and demonstrated that they can work well in practice. Later, Cadena \emph{et al.} \cite{cadena2016dense} applied semidefinite relaxation (SDR) \cite{luo2010semidefinite} to the problem, and provided sufficient conditions under which SDR can guarantee a high-quality approximate solution. However, the high complexity incurred in solving the semidefinite program is a limitation of the approach.

\noindent \textbf{Approach and Contributions:} In this paper, we study the general problem of mining dense subgraphs from undirected graphs. In contrast to the prevailing approaches outlined above, we advocate a very simple method which can be summarized as follows: visit every vertex in the graph, compute the edge-density of the subgraph induced by its one-hop neighbors, and output the ``best'' (in a certain sense). This simply entails computing the local clustering coefficient \cite{newman2018networks} of every vertex, which can be accomplished by enumerating all triangles in the graph -- a task for which there exist several efficient algorithms \cite{latapy2008main,MACEcode2015}.  

While the approach may seem $\emph{apriori}$ naive  (it only considers one-hop neighborhoods), we provide theoretical justification for it by establishing the following result: if a graph possesses a large global clustering coefficient \cite{watts1998collective} and a heavy-tailed degree distribution \cite{barabasi1999emergence} (two recurring traits of real-world networks \cite{faloutsos1999power,watts1998collective}), then it includes large and dense vertex neighborhoods. Our work is motivated by the result of \cite{gleich2012vertex}, which established that the aforementioned properties of real-world networks imply that neighborhood subgraphs form communities with low conductance scores. However, to the best of our knowledge, the question of whether these properties also imply that vertex neighborhoods themselves constitute large and dense subgraphs (in the sense of being quasi-cliques) has not been studied prior to our present work. More specifically, our result differs from that  of \cite{gleich2012vertex} in the following aspects. 
\begin{itemize}
    \item The authors of \cite{gleich2012vertex} use a probabilistic existence argument to show that high global clustering coefficients and power-law degree distributions imply that there exists a vertex neighborhood with \emph{low conductance}. While we utilize the same probabilistic argument and the same twin graph characteristics, our result formally shows the existence of neighborhoods of non-trivial sizes possessing \emph{high edge-density}, which is a very different metric than conductance, and  necessitates a different line of analysis compared to that used in \cite{gleich2012vertex}.
    \item In \cite{gleich2012vertex}, it is also shown that the aforementioned properties of a graph imply the existence of a $k$-core\footnote{A $k$-core is the maximal subgraph of a graph where every vertex is connected to at least $k$ other vertices.}, which is a particular type of dense subgraph. Here, we restrict our attention to vertex neighborhoods, and adopt the edge-density of a subgraph as our notion of density. In general, these two notions of density are not directly comparable. 
    Moreover, the result of \cite{gleich2012vertex} relies on an argument that requires the graph to grow asymptotically in size. In contrast, we provide a non-asymptotic analysis to establish our result, albeit at the expense of making an explicit assumption on the power-law exponent of the degree distribution. 
\end{itemize}
It has further been shown \cite{gupta2014decompositions} that irrespective of the degree distribution, graphs with high global clustering coefficients admit a decomposition as a union of vertex disjoint subgraphs, each of which is guaranteed to possess a certain minimum edge and triangle density. We point out that where neighborhoods are concerned, high edge and triangle density are necessary, but not sufficient to guarantee the presence of dense neighborhoods of non-trivial sizes. As a counter-example, consider a graph which is a union of disjoint $4$-cliques. In this case, the global clustering coefficient is the maximum possible value $1$, and each vertex neighborhood is simply a triangle, which also attains maximum edge and triangle density. To rule out such unfavorable cases, we employ the power-law degree assumption, which is commonly observed in many real-world networks.    

In order to test our hypothesis regarding the existence of such large neighborhood subgraphs with high edge-density, and to gauge the empirical efficacy of our approach, we carried out a series of experiments on $15$ different publicly available datasets, with the \textsc{GreedyOQC} algorithm of \cite{tsourakakis2013denser} and the sophisticated maximum flow-based algorithm of \cite{mitzenmacher2015scalable} for computing the triangle-densest subgraph \cite{tsourakakis2015k} used as benchmarks. We point out that these baselines are not neighborhood based, and constitute dedicated algorithms for dense subgraph discovery. Our main empirical findings can be summarized as follows:
\begin{itemize}
    \item For graphs which obey our sufficient conditions, we discovered that neighborhoods can surprisingly form \emph{maximal cliques} and quasi-cliques of non-trivial sizes. Furthermore, the quality of these neighborhood subgraphs is comparable, or even better compared to the baselines. While these results validate the essence of our theoretical argument, they also reveal the conservative nature of our analysis, as we obtain better results in practice. 
    \item For graphs with low global clustering coefficients, neighborhoods with high local clustering coefficients can be small in size. However, we demonstrate that they can serve as good seed sets for a local-search algorithm proposed in \cite{tsourakakis2013denser}. We provide empirical justification for our choice by demonstrating that it is consistently better in terms of size and edge-density compared to subgraphs obtained via other simple seeding strategies such as the core decomposition and selecting neighborhoods with high average degree. Refining our neighborhoods via this algorithm allows us to obtain cliques and near-cliques of even better quality compared to the baselines.
\end{itemize}
Finally, we note that, while the scope of our algorithmic contributions is limited, our main purpose is to highlight the fact that substantially large dense neighborhoods exist in real-world graphs. On the theoretical side, we provide practical sufficient conditions on the graph characteristics (in terms of power-law degree distributions and clustering coefficients) to quantify the existence of such large, dense neighborhoods. On the practical side, via extensive experiments, we verify that such neighborhoods are of comparable, or even better quality, compared to a range of baselines, and when refined using a local search algorithm they yield state-of-the-art results.      
Our findings suggest that contrary to worst-case complexity results \cite{garey2002computers,ferrante2008hardness,eppstein2010listing}, it is possible to extract large cliques and near-cliques from real-world graphs using a {\em very simple approach} -- and this is quite remarkable. 

\section{Preliminaries}

 Given a simple, unweighted, undirected graph $\setG := (\setV,\setE)$ on $n$ vertices, the \emph{neighborhood} of a vertex $v \in \setV$ is the subset of vertices $\setN_v \subseteq \setV$ that share an edge with $v$. This can be expressed as
 \begin{equation}
 \setN_v := \{ u \in \setV: (u,v) \in \setE\}, \forall \; v \in \setV.
 \end{equation}
 The degree of vertex $v \in \setV$ is $d_v:=|\setN_v|$. 
 A \emph{wedge} is a path of length $2$ formed by an unordered pair of edges $\{(s,v),(v,t)\}$ that share a common vertex $v$. A wedge is said to be \emph{closed} if its end points $(s,t)$ are connected by an edge. Let $w_v:=\binom{d_v}{2}$ denote the number of wedges centered at vertex $v$ and $w_v^{(c)}$ denote the corresponding number of closed wedges.
 The \emph{local clustering coefficient} of $v$ is then the fraction of wedges centered at $v$ that are closed, i.e.,
 \begin{equation}
 C_v := \frac{w_v^{(c)}}{w_v}, \forall \; v \in \setV.
 \end{equation}
 Let $w:=\sum_{v \in \setV}w_v$ be the total number of wedges in $\setG$. The \emph{global clustering coefficient} of $\setG$ is the overall fraction of wedges in $\setG$ that are closed, i.e.,
 \begin{equation}
 C_g := \frac{1}{w}\sum_{v \in \setV}w_v^{(c)}.
 \end{equation}
Define a probability mass function $p$ on the vertices of $\setG$ that assigns each vertex $v \in \setV$ a probability equal to the fraction of overall wedges centered at $v$, i.e.,
 \begin{equation}
 p_v := \frac{w_v}{w}, \forall \; v \in \setV.
 \end{equation}
 It is known \cite[Claim 4.2]{gleich2012vertex} that the above twin definitions of clustering coefficients obey the following relation with respect to (w.r.t.) the distribution $p$. 
 \begin{equation}\label{eq:clusters}
 \mathbb{E}_p[C_v] = C_g
 \end{equation}  
 Given a subset of vertices $\setS \subseteq \setV$, define ${\setE}({\setS})$ as the subset of $\setE$ containing edges only between the vertices in $\setS$. For the subgraph $\setG_{\setS} := (\setS,\setE(\setS))$ induced by $\setS$, let $e(\setS):=|\setE(\setS)|$ denote the number of edges in $\setG_{\setS}$. The density of a subgraph is measured via its edge-density
 \begin{equation}
 \delta(\setS):= \frac{e(\setS)}{\binom{|\setS|}{2}},
 \end{equation}  
 which quantifies how closely $\setG_S$ resembles a clique on $|\setS|$ vertices in terms of edges, i.e., $ \delta(\setS) = 1$ when $\setS$ is a clique. Given a parameter $\alpha \in (0,1)$, a subgraph $\setG_S$ is said to be a $\alpha-$quasi-clique if $\delta(\setS) \geq \alpha$, i.e., if the number of its edges is at least as large as a fixed fraction $\alpha$ of the edges in a clique on $|\setS|$ vertices.
  
\section{Vertex neighborhoods as dense subgraphs}
In this section, we analyze whether vertex neighborhoods themselves can be potential candidates for dense subgraphs in real-world graphs. Our starting point is the following simple observation which states that the edge-density of a vertex neighborhood equals its local clustering coefficient. 
\begin{lemma}
For all $\setS = \setN_v$, $ \delta(\setS) = C_v$.
\end{lemma}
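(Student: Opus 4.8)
The plan is to unwind both sides of the claimed identity directly from the definitions and exhibit a natural correspondence between the closed wedges centered at $v$ and the edges of the induced neighborhood subgraph. First I would observe that since $\setS = \setN_v$, we have $|\setS| = |\setN_v| = d_v$, so the denominator of the edge-density $\delta(\setS) = e(\setS)/\binom{|\setS|}{2}$ is exactly $\binom{d_v}{2} = w_v$, the number of wedges centered at $v$. This already matches the denominator of $C_v = w_v^{(c)}/w_v$, so it remains only to reconcile the numerators.

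The core of the argument is therefore to show $e(\setS) = w_v^{(c)}$. I would note that a wedge centered at $v$ is precisely an unordered pair $\{s,t\}$ of distinct neighbors of $v$ (each such pair yields the path $s - v - t$), so wedges centered at $v$ are in bijection with the $\binom{d_v}{2}$ two-element subsets of $\setN_v$. Under this bijection, a wedge $\{(s,v),(v,t)\}$ is \emph{closed} exactly when $(s,t) \in \setE$; since $s,t \in \setN_v = \setS$, this is equivalent to $(s,t)$ being an edge of the induced subgraph $\setG_{\setS}$. Hence the closed wedges centered at $v$ are in one-to-one correspondence with the edges of $\setG_{\setS}$, giving $w_v^{(c)} = e(\setS)$. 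Combining the two observations yields $\delta(\setS) = e(\setS)/\binom{|\setS|}{2} = w_v^{(c)}/w_v = C_v$.

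Honestly, there is no genuine obstacle here: the statement is a definition-chasing identity, and the only point that warrants a moment's care is verifying that the correspondence between closed wedges and edges of $\setG_{\setS}$ is a true bijection, i.e., that no edge among the neighbors is double-counted and every such edge arises from exactly one closed wedge. This follows immediately because both families of objects are indexed by the same unordered pairs $\{s,t\} \subseteq \setN_v$, so the map is manifestly injective and surjective.
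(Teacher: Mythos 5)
Your proof is correct and follows the same route as the paper's: identify the denominator $\binom{|\setN_v|}{2}$ with $w_v$ via $|\setN_v| = d_v$, and identify $e(\setN_v)$ with $w_v^{(c)}$ via the correspondence between edges among neighbors and closed wedges centered at $v$. You merely spell out the bijection more explicitly than the paper does, which is a fine (if slightly more verbose) presentation of the identical argument.
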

\begin{proof}
Observe that every edge	in $\setN_v$ induces a closed wedge centered at $v$, which implies that $e(\setN_v) = w_v^{(c)}$. Furthermore, as $d_v = |\setN_v|$, we have $\binom{|\setN_v|}{2} = w_v$.
\end{proof}
\noindent If we treat $C_v$ as a random variable with distribution $p$, an immediate consequence of the above lemma and \eqref{eq:clusters} is the following equation
\begin{equation}\label{eq:avg_density}
\mathbb{E}_{p}[\delta(\setN_v)] = C_g,
\end{equation}
which implies that for graphs with large global clustering coefficients, the edge-density of a vertex neighborhood is also large on average. 
 If a vertex $v \in \setV$ is sampled with probability $p_v$, we can establish the following bounds on the probability of $\setN_v$ being an $\alpha-$quasi-clique. 
\begin{lemma}
For all $\alpha > C_g$,
\begin{equation}\label{eq:markov}
\textrm{Pr}\{\delta(\setN_v) \geq \alpha\} \leq \frac{C_g}{\alpha}, \forall \; v \in \setV.
\end{equation}
Meanwhile, for $\alpha < C_g$,
\begin{equation}\label{eq:lb}
\textrm{Pr}\{\delta(\setN_v) \geq \alpha\} \geq \frac{C_g-\alpha}{1-\alpha}, \forall \; v \in \setV.
\end{equation} 
\end{lemma}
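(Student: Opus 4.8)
The plan is to view $\delta(\setN_v)$ as a single bounded random variable and to derive both inequalities from its mean alone. By the preceding lemma we have $\delta(\setN_v) = C_v$, so under the sampling distribution $p$ the quantity $\delta(\setN_v)$ is a random variable taking values in $[0,1]$ (it is simultaneously an edge-density and a clustering coefficient), with expectation $\mathbb{E}_p[\delta(\setN_v)] = C_g$ by \eqref{eq:avg_density}. Every claim then reduces to a one-dimensional tail estimate for a $[0,1]$-valued random variable with known mean $C_g$, and no further structure of the graph is required.

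For the upper bound \eqref{eq:markov}, since $\delta(\setN_v) \geq 0$ and $\alpha > 0$, I would apply Markov's inequality directly to get $\textrm{Pr}\{\delta(\setN_v) \geq \alpha\} \leq \mathbb{E}_p[\delta(\setN_v)]/\alpha = C_g/\alpha$. The hypothesis $\alpha > C_g$ is exactly what makes this bound informative, i.e., strictly less than one.

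The lower bound \eqref{eq:lb} is the only genuinely new step, and it requires a reverse-Markov argument that exploits the fact that $\delta(\setN_v) \leq 1$, not merely its nonnegativity. I would split the mean over the complementary events $\{\delta(\setN_v) \geq \alpha\}$ and $\{\delta(\setN_v) < \alpha\}$: on the first event bound $\delta(\setN_v) \leq 1$, and on the second bound $\delta(\setN_v) < \alpha$. Writing $q := \textrm{Pr}\{\delta(\setN_v) \geq \alpha\}$, this yields
\begin{equation*}
C_g = \mathbb{E}_p[\delta(\setN_v)] \leq 1 \cdot q + \alpha \cdot (1 - q) = \alpha + (1-\alpha)\,q,
\end{equation*}
and solving for $q$ gives $q \geq (C_g - \alpha)/(1-\alpha)$, which is \eqref{eq:lb}. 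The condition $\alpha < C_g$ guarantees that the right-hand side is positive, so the bound is nontrivial.

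The main obstacle is conceptual rather than computational: the lower bound cannot be obtained from ordinary Markov-type reasoning, which only uses nonnegativity of the random variable; the essential ingredient is the two-sided boundedness $0 \leq \delta(\setN_v) \leq 1$, which holds precisely because $\delta$ is a density. One should also be mildly careful with the strict-versus-nonstrict inequalities in the event decomposition, though replacing $<$ by $\leq$ on the low event does not change the resulting bound.
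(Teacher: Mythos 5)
Your proposal is correct, and your upper bound is exactly the paper's: a direct application of Markov's inequality, made nontrivial by $\alpha > C_g$. The difference is in the lower bound. The paper does not derive it from first principles; it imports the tail inequality
\begin{equation*}
\textrm{Pr}\{C_v \leq \alpha\} \leq \frac{1-C_g}{1-\alpha}
\end{equation*}
from Theorem 4.6 of \cite{gleich2012vertex} and combines it with Lemma 3.1 ($\delta(\setN_v) = C_v$). You instead re-derive this fact in a self-contained way, by splitting the expectation over the events $\{\delta(\setN_v) \geq \alpha\}$ and $\{\delta(\setN_v) < \alpha\}$ to get $C_g \leq q + \alpha(1-q)$ with $q := \textrm{Pr}\{\delta(\setN_v) \geq \alpha\}$. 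This is precisely the reverse-Markov argument, and it is equivalent to applying ordinary Markov's inequality to the nonnegative variable $1 - C_v$, namely $\textrm{Pr}\{1 - C_v \geq 1-\alpha\} \leq \frac{1-C_g}{1-\alpha}$, which is how the cited result itself is obtained. So the two proofs rest on the same underlying fact: yours has the virtue of being self-contained and of making explicit that the only ingredients are $\delta(\setN_v) \in [0,1]$ and $\mathbb{E}_p[\delta(\setN_v)] = C_g$, while the paper's is shorter because it delegates this step to the reference. Your caution about strict versus non-strict inequalities in the event decomposition is warranted and handled correctly; the same slack is absorbed harmlessly in the paper's version when passing from $\textrm{Pr}\{C_v \leq \alpha\}$ to the complement of $\{\delta(\setN_v) \geq \alpha\}$.
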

\begin{proof}
The upper bound \eqref{eq:markov} follows as a simple consequence of Markov's inequality. To establish the lower bound \eqref{eq:lb}, we use the following result extracted from \cite[Theorem 4.6]{gleich2012vertex} 
\begin{equation}\label{eq:eventA}
\textrm{Pr}\{C_v \leq \alpha\} \leq \frac{1-C_g}{1-\alpha}.
\end{equation}
Combining the above inequality with Lemma 3.1 yields the desired claim.
\end{proof}
\noindent Clearly, the lower bound \eqref{eq:lb} is more informative compared to the upper bound \eqref{eq:markov}, as Markov's inequality typically yields a loose bound on the tail probability. Note that for large $C_g$, the lower bound \eqref{eq:lb} can yield a non-trivial result. This can be observed from Figure \ref{fig:fig2}, which illustrates the bounds as a function of $\alpha$ for $C_g = 0.7$. For example, when $\alpha = 2/3$, observe that the probability of a vertex neighborhood $\setN_v$ being a $2/3-$quasi-clique is at least $10\%$. It is also evident that the bounds diverge as $\alpha$ approaches the mean $C_g$. It is only in the extreme case of $C_g = 1$, that the bounds coincide to yield $\textrm{Pr}\{\delta(\setN_v) \geq \alpha\} = 1$. This result can be explained by the fact that for $C_g = 1$, the graph $\setG$ is a union of disjoint cliques. Consequently, any vertex neighborhood is also a clique (being the subgraph of a clique), which is always a quasi-clique for every choice of $\alpha$. 

\begin{figure}[t!]
	\centering
	\includegraphics[width = 0.3\textwidth]{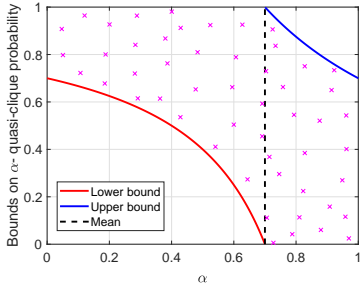}
	\caption{\footnotesize Illustration of the upper bound \eqref{eq:markov} and lower bound \eqref{eq:lb} on the probability of a vertex neighborhood being a $\alpha-$quasi-clique for $C_g = 0.7$. The purple crosses mark the feasible region.} 
	\label{fig:fig2} 	
\end{figure}

 Additional insight regarding the behavior of the distribution about the mean $C_g$ can be obtained by analyzing the variance of $\delta(\setN_v)$. To this end, we will require the following result.
\begin{lemma}
	$\mathbb{V}_p[\delta(\setN_v)] \leq C_g(1-C_g)$
\end{lemma}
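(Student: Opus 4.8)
The plan is to exploit the fact that $\delta(\setN_v)$ is a bounded random variable. By Lemma 3.1, $\delta(\setN_v) = C_v$, and since a clustering coefficient is a fraction of closed wedges (equivalently, an edge-density), it always lies in the interval $[0,1]$. Treating $X := \delta(\setN_v)$ as a random variable under the distribution $p$, I would recall from \eqref{eq:avg_density} that $\mathbb{E}_p[X] = C_g$. The goal is then simply to bound the variance of a $[0,1]$-valued random variable in terms of its mean.

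The key step is the elementary observation that $X \in [0,1]$ forces $X^2 \leq X$ pointwise. Starting from the standard decomposition $\mathbb{V}_p[X] = \mathbb{E}_p[X^2] - (\mathbb{E}_p[X])^2$, I would apply monotonicity of expectation to $X^2 \leq X$ to obtain $\mathbb{E}_p[X^2] \leq \mathbb{E}_p[X] = C_g$. Substituting this bound and using $\mathbb{E}_p[X] = C_g$ again yields
\begin{equation}
\mathbb{V}_p[\delta(\setN_v)] \leq C_g - C_g^2 = C_g(1-C_g),
\end{equation}
which is precisely the claimed inequality.

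There is no genuine obstacle here; the entire argument rests on the two facts that $\delta(\setN_v)$ is confined to $[0,1]$ and that its mean equals $C_g$. The only point worth stating explicitly is the justification for $\delta(\setN_v) \in [0,1]$, which follows immediately from the definition of edge-density together with the identity in Lemma 3.1. This bound is in fact the well-known variance bound for Bernoulli-supported (indeed any $[0,1]$-valued) random variables, and it is tight exactly when $\delta(\setN_v)$ is $\{0,1\}$-valued, consistent with the earlier remark that equality phenomena arise in the degenerate regime $C_g \in \{0,1\}$.
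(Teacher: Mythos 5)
Your proof is correct and follows essentially the same route as the paper: both arguments bound the second moment via the pointwise inequality $C_v^2 \leq C_v$ (valid since $C_v = \delta(\setN_v) \in [0,1]$), giving $\mathbb{E}_p[C_v^2] \leq \mathbb{E}_p[C_v] = C_g$, and then substitute into the decomposition $\mathbb{V}_p = \mathbb{E}_p[C_v^2] - (\mathbb{E}_p[C_v])^2$. The paper simply writes the same step as an explicit sum $\sum_v p_v C_v^2 \leq \sum_v p_v C_v$, so there is no substantive difference.
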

\begin{proof}
	Note that the second-order moment of the random variable $C_v$ can be bounded as 
	\begin{subequations}
		\begin{alignat}{7}
		\mathbb{E}_{p}[C_v^2] &= \sum_{v \in \setV} p_vC_v^2 
		\leq \sum_{v \in \setV} p_vC_v  = C_g,
		\end{alignat}
	\end{subequations}
	where the inequality stems from the fact that $C_v \in [0,1], \forall \; v \in \setV$. Combining the result with \eqref{eq:clusters} and Lemma 1, we obtain
	\begin{equation}
	\begin{aligned}
	\mathbb{V}_p[\delta(\setN_v)] &= \mathbb{E}_{p}[C_v^2] - (\mathbb{E}_{p}[C_v])^2 \\
	&\leq C_g(1-C_g),
	\end{aligned}
	\end{equation}
	which establishes the desired claim.
\end{proof}

\noindent The result implies that for low $C_g$, the variance is small, and thus the values of $\delta(\setN_v)$ are likely to be ``close'' to the mean $C_g$. In other words, it is unlikely that many neighborhoods exhibit high edge-density. Conversely, as the obtained bound is symmetric about $C_g = 1/2$, for large $C_g$, the vertex neighborhoods with edge-density close to $C_g$ are likely candidates for being dense subgraphs.

While the aforementioned results suggest that graphs with high global clustering coefficients harbor potentially many dense vertex neighborhoods, as pointed out in the introduction, high edge-density alone is a necessary, but not sufficient condition for the existence of large, dense vertex neighborhoods.

Thus far, our analysis has only been reliant on the clustering coefficient of a graph. We now attempt to incorporate another salient characteristic of real-world graphs into our analysis: heavily-skewed degree distributions. It is well known that the degree distribution of many graphs can be well approximated by a power-law \cite{watts1998collective}. Let $(d_{\min}, d_{\max})$ denote the smallest degree $>1$ and the largest degree of a graph respectively, and let $\setD:=\{d_{\min}, \cdots,d_{\max}\}$  denote the set of unique degrees in $\setG$. For a given degree $d \in \setD$, let $n_d$ denote the number of times a vertex $v \in \setV$ takes value $d$. In order to facilitate analysis, we make the following simplifying assumptions:
\begin{enumerate}
	\item[(C1)] The power law exponent of the degree distribution of $\setG$ is $\gamma = 2$, which is fairly reasonable as $\gamma$ typically takes values in the range $[1.75,3]$ for real world networks \footnote{This choice is made for convenience and brevity of exposition; we can handle other values of $\gamma > 2$ as well, but the derivations are more cumbersome, see Remark 1.}. This enables us to express 
	\begin{equation}\label{eq:power_law}
	n_d = c n d^{-2}, \forall\; d \in \setD,
	\end{equation}
	where $c \in \mathbb{R}$ denotes the normalization constant of the degree distribution.
	\item[(C2)] The set $\setD$ does not contain any ``missing'' degrees, i.e, there exists a vertex of degree $d$ for every possible choice of $d$ satisfying $ d_{\min} \leq d \leq d_{\max}$. 
\end{enumerate}
  
Our objective is now to combine both aspects (skewed degree distributions and high clustering coefficients) to formally establish the existence of vertex neighborhoods of non-trivial sizes with high edge density. In order to do so, we take recourse to the probabilistic method \cite{alon2016probabilistic}, a classical and powerful technique in combinatorics for certifying the existence of combinatorial objects possessing certain properties within a probability space. We proceed by first  defining the following pair of ``bad'' events:
\begin{enumerate}
	\item[(A)] a vertex sampled with probability $p_v$ has a neighborhood with ``low'' edge-density,
	\item[(B)] a vertex sampled with probability $p_v$ has a ``small'' degree, i.e., a neighborhood of small size.  
\end{enumerate}	
If we can establish that the probability of either event occurring is strictly less than 1, then it implies the existence of a vertex neighborhood which simultaneously possesses high edge-density and non-trivial size. The exact notions of ``low'' edge-density and ``small'' neighborhood size will be quantified next.   

Note that \eqref{eq:eventA} already provides an upper bound on the probability of event A occurring. We now seek to establish an upper bound on the probability of event B. For a given parameter $\beta \in \biggl(\frac{d_{\min}}{d_{\max}},1\biggr)$, define $\bar{d}:= \beta d_{\max}$ and let $\setS_{\bar{d}}$ denote the set of all vertices having degree greater than $1$ but lesser than equal to $\bar{d}$, i.e.,
\begin{equation}
\setS_{\bar{d}} := \{v \in \setV: d_{\min} \leq d_v \leq \bar{d}\}.
\end{equation}
We also define a set ${\bar{\setD}} \subseteq \setD$ to be the subset of all unique degrees of $\setG$ not exceeding $\bar{d}$, i.e.,
\begin{equation}
\bar{\setD} := \{d \in \setD: d_{\min} \leq d \leq \bar{d}\}.
\end{equation}
Armed with these definitions, we can derive the following upper bound on the probability of sampling a vertex with a degree smaller than a fraction $\beta$ of the largest degree $d_{\max}$.
\begin{lemma}
	$\textrm{Pr}\{v \in \setS_{\bar{d}}\} <  \frac{\beta d_{\max}  - \log \beta}
	{d_{\max} - \log\biggl(\frac{d_{\min}}{d_{\min} - 1}\biggr)} $
\end{lemma}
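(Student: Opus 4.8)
The plan is to turn the probability into an explicit ratio of sums over the distinct degrees and then control that ratio by a single joint argument, rather than by bounding its numerator and denominator separately. First I would unfold the definition: since $v$ is drawn with probability $p_v = w_v/w$ and $w_v = \binom{d_v}{2}$, we have $\textrm{Pr}\{v \in \setS_{\bar{d}}\} = \frac{1}{w}\sum_{v \in \setS_{\bar{d}}}\binom{d_v}{2}$. Vertices of degree at most $1$ contribute nothing, so I would group the remaining vertices by degree and invoke (C2) (no missing degrees) together with (C1) ($n_d = c n d^{-2}$). For a fixed degree $d$ the contribution is $n_d\binom{d}{2} = \frac{cn}{2}\bigl(1-\frac{1}{d}\bigr)$, and the common factor $cn/2$ cancels between numerator and denominator, leaving
\[
\textrm{Pr}\{v \in \setS_{\bar{d}}\} = \frac{\sum_{d=d_{\min}}^{\bar{d}}\left(1-\frac{1}{d}\right)}{\sum_{d=d_{\min}}^{d_{\max}}\left(1-\frac{1}{d}\right)}.
\]

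Next I would denote the numerator by $A$ and split the denominator as $A+B$, where $B := \sum_{d=\bar{d}+1}^{d_{\max}}\bigl(1-\frac{1}{d}\bigr)$. The key structural fact is that $\frac{A}{A+B}$ is increasing in $A$ and decreasing in $B$, so replacing $A$ by any upper bound $\bar{A}\geq A$ and $B$ by any positive lower bound $\underline{B}\leq B$ can only increase the value; it then suffices to choose $\bar A,\underline B$ so that $\frac{\bar A}{\bar A+\underline B}$ equals the claimed fraction. For $\bar A$ I would take $\beta d_{\max}-\log\beta$: each of the summands in $A$ is strictly below $1$ and there are fewer than $\bar{d}=\beta d_{\max}$ of them (using $d_{\min}\geq 2$), so $A < \beta d_{\max} < \beta d_{\max}-\log\beta$ since $\log\beta<0$. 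For the lower bound on $B$ I would write $B = (1-\beta)d_{\max} - \sum_{d=\bar{d}+1}^{d_{\max}}\frac{1}{d}$ and estimate the harmonic tail by an integral,
\[
\sum_{d=\bar{d}+1}^{d_{\max}}\frac{1}{d} \;\leq\; \int_{\bar{d}}^{d_{\max}}\frac{dx}{x} \;=\; \log\frac{d_{\max}}{\bar{d}} \;=\; -\log\beta,
\]
which gives $B \geq (1-\beta)d_{\max}+\log\beta$. Setting $\underline{B} := (1-\beta)d_{\max}+\log\beta-\log\frac{d_{\min}}{d_{\min}-1}$, which is no larger than this lower bound because $\log\frac{d_{\min}}{d_{\min}-1}\geq 0$, I obtain $\bar A+\underline B = d_{\max}-\log\frac{d_{\min}}{d_{\min}-1}$, and monotonicity yields $\textrm{Pr}\{v\in\setS_{\bar{d}}\}\leq \frac{\bar A}{\bar A+\underline B}$, which is precisely the stated expression; strictness follows from $A<\bar A$.

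The main obstacle is conceptual rather than computational: the obvious route of bounding the numerator from above and the denominator from below \emph{independently} does not work, because the target denominator $d_{\max}-\log\frac{d_{\min}}{d_{\min}-1}$ is in fact \emph{larger} than the true denominator $\sum_{d=d_{\min}}^{d_{\max}}\bigl(1-\frac{1}{d}\bigr)$ and hence is not a valid lower bound for it. One must instead control the ratio jointly through the split $A/(A+B)$, arranging the bookkeeping so that $\bar A+\underline B$ collapses exactly to the claimed denominator (the $-\log\frac{d_{\min}}{d_{\min}-1}$ term is slack introduced to match the stated form; the same argument actually delivers the sharper denominator $d_{\max}$). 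The remaining points are routine: the non-integrality of $\bar{d}=\beta d_{\max}$ is handled by taking $\lfloor\bar{d}\rfloor$ as the cutoff, one checks $\underline{B}>0$ (guaranteed once $d_{\max}$ is large enough) so that the monotonicity step is legitimate, and throughout $\log$ denotes the natural logarithm, as forced by the integral and harmonic-sum estimates.
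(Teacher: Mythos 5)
Your proof is correct, and it takes a genuinely different route from the paper's. Both arguments begin identically, using (C1)--(C2) to reduce the sampling probability to the ratio $\sum_{d=d_{\min}}^{\bar{d}}\bigl(1-\tfrac{1}{d}\bigr)\,\big/\,\sum_{d=d_{\min}}^{d_{\max}}\bigl(1-\tfrac{1}{d}\bigr)$, which is the paper's equation \eqref{eq:eventB2}. From there the paper applies the integral estimate to \emph{both} partial harmonic sums (a lower bound on the numerator's sum, an upper bound on the denominator's sum), and then massages the resulting intermediate ratio into the stated form through a chain of strict inequalities of the type $\tfrac{a-c}{b-c}<\tfrac{a}{b}$ for $a<b$, $c>0$ (dropping the common terms $\log(d_{\max}/d_{\min})$ and $d_{\min}-1$ from numerator and denominator). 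You instead split the denominator as $A+B$ (head plus tail), bound the head $A$ by pure counting (each summand is below $1$ and, since $d_{\min}\geq 2$, there are fewer than $\beta d_{\max}$ of them), reserve the integral estimate for the tail harmonic sum alone, and assemble everything through the monotonicity of $A/(A+B)$, with the slack term $-\log\bigl(\tfrac{d_{\min}}{d_{\min}-1}\bigr)$ inserted into $\underline{B}$ precisely so that $\bar{A}+\underline{B}$ collapses to the stated denominator. Your diagnosis of the obstacle is also accurate: the stated denominator $d_{\max}-\log\bigl(\tfrac{d_{\min}}{d_{\min}-1}\bigr)$ exceeds the true denominator, so purely independent bounding of numerator and denominator cannot reach the claimed form --- which is exactly why the paper, after its separate bounds, still needs the joint chain \eqref{eq:chain2}; the paper's trick $\tfrac{a-c}{b-c}<\tfrac{a}{b}$ is itself a fixed-gap instance of your $A/(A+B)$ monotonicity. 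What each approach buys: yours is shorter, needs only one integral estimate, exposes that the $-\log\beta$ in the claimed numerator exactly absorbs the harmonic-tail deficit of the denominator, and delivers the sharper denominator $d_{\max}$ as a byproduct; the paper's version keeps explicit track of $|\bar{\setD}|$ and $|\setD|$, which is the bookkeeping that Remark 1 says carries over to exponents $\gamma>2$. One caveat, shared by both proofs: the monotone step (your requirement $\underline{B}>0$; the paper's requirement $a<b$) amounts to assuming $\beta d_{\max}-\log\beta < d_{\max}-\log\bigl(\tfrac{d_{\min}}{d_{\min}-1}\bigr)$, i.e., that the claimed bound is below $1$; when it is not, the claim holds vacuously because the probability is strictly below $1$ (the vertex of degree $d_{\max}$ lies outside $\setS_{\bar{d}}$). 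You flag this condition explicitly, whereas the paper leaves it implicit.
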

\begin{proof}
The probability of event B can be expressed as 
\begin{equation}\label{eq:eventB}
\textrm{Pr}\{v \in \setS_{\bar{d}}\} = \sum_{v \in \setS_{\bar{d}}}p_v 
= \sum_{v \in \setS_{\bar{d}}} \frac{w_v}{w}
= \frac{\sum_{v \in \setS_{\bar{d}}}w_v}{\sum_{v \in \setV} w_v}.
\end{equation}
Exploiting the twin facts that $w_v = \binom{d_v}{2}, \forall\; v \in \setV$ and that the degree distribution of $\setG$ obeys a power law of the form \eqref{eq:power_law}, we obtain the following expressions for the numerator and denominator of \eqref{eq:eventB}
\begin{equation}
\begin{aligned}
\sum_{v \in \setS_{\bar{d}}}w_v &= \frac{c n}{2}\sum_{d \in \bar{\setD}}d(d-1)d^{-2} 
= \frac{c n}{2}\sum_{d \in \bar{\setD}} \biggl(1-\frac{1}{d}\biggr),\\
 \sum_{v \in \setV}w_v &= \frac{c n}{2}\sum_{d \in \setD} d(d-1)d^{-2} 
= \frac{c n}{2}\sum_{d \in \setD} \biggl(1-\frac{1}{d}\biggr).
\end{aligned}
\end{equation} 
This allows us to further simplify \eqref{eq:eventB} to
\begin{equation}\label{eq:eventB2}
\textrm{Pr}\{v \in \setS_{\bar{d}}\} 
= \frac{|\bar{\setD}| - \sum_{d_{\min}}^{\bar{d}}1/d } {|{\setD}| - \sum_{d_{\min}}^{d_{\max}}1/d}.
\end{equation}
In order to derive an upper bound on \eqref{eq:eventB2}, we exploit the following general fact regarding partial harmonic sums (see \cite[Appendix A, p. 1154]{cormen2009introduction})
\begin{equation}
\int_{l}^{u+1}\frac{dx}{x} \leq \sum_{n=l}^{u}\frac{1}{n} \leq \int_{u-1}^{l} \frac{dx}{x}
\end{equation}
where $(l,u)$ are integers that obey $1<l<u$ and denote the lower and upper limits of the sum respectively. On computing the integrals, we obtain the approximation bounds
\begin{equation}
\log\biggl(\frac{u+1}{l}\biggr) \leq \sum_{n=l}^{u}\frac{1}{n} \leq \log\biggl(\frac{u}{l-1}\biggr)
\end{equation}
Applying the lower bound to the partial harmonic sum appearing in the numerator and the upper bound to the one in the denominator of \eqref{eq:eventB2}, we obtain 
\begin{equation}\label{eq:chain1}
\textrm{Pr}\{v \in \setS_{\bar{d}}\} \leq \frac{|\bar{\setD}|-\log\biggl(\frac{\bar{d}+1}{d_{\min}}\biggr)}
{|\setD|-\log\biggl(\frac{d_{\max}}{d_{\min}-1}\biggr)}
\end{equation}
The upper bound obtained above can be further bounded by applying the following chain of (strict) inequalities
\begin{equation}\label{eq:chain2}
\begin{aligned}
\frac{|\bar{\setD}|-\log\biggl(\frac{\bar{d}+1}{d_{\min}}\biggr)}
{|\setD|-\log\biggl(\frac{d_{\max}}{d_{\min}-1}\biggr)}
 &< \frac{|\bar{\setD}|-\log\biggl(\frac{\bar{d}}{d_{\min}}\biggr)}
{|\setD|-\log\biggl(\frac{d_{\max}}{d_{\min}-1}\biggr)}\\
&= \frac{|\bar{\setD}| - \log \beta - \log\biggl(\frac{d_{\max}}{d_{\min}}\biggr)}
{|\setD| -\log\biggl(\frac{d_{\max}}{d_{\min}-1}\biggr)}\\
&= \frac{|\bar{\setD}| - \log \beta - \log\biggl(\frac{d_{\max}}{d_{\min}}\biggr)}
{|\setD| - \log\biggl(\frac{d_{\min}}{d_{\min}-1}\biggr) - \log\biggl(\frac{d_{\max}}{d_{\min}}\biggr)}\\
&< \frac{|\bar{\setD}| - \log \beta}
{|\setD| - \log\biggl(\frac{d_{\min}}{d_{\min}-1}\biggr)}.
\end{aligned}
\end{equation}
Upon defining $\Delta := \frac{d_{\min}}{d_{\min}-1}$, and using the fact that 
$$|\bar{\setD}| = \bar{d}-d_{\min}+1 = \beta d_{\max} - d_{\min}+1,$$
$$ |{\setD}| = d_{\max} - d_{\min}+1, $$
it simply remains to apply the chain of inequalities derived in \eqref{eq:chain1} and \eqref{eq:chain2} to finally obtain the claimed upper bound on the probability of event B
\begin{equation}
\begin{aligned}
\textrm{Pr}\{v \in \setS_{\bar{d}}\} &< \frac{\beta d_{\max} - d_{\min}+1 - \log \beta}
{d_{\max} - d_{\min}+1 - \log\Delta}\\
&< \frac{\beta d_{\max}  - \log \beta}
{d_{\max} - \log\Delta}
\end{aligned}
\end{equation}

\end{proof}
\noindent \textbf{Remark 1:} Our assumption regarding the value of the power-law exponent can be relaxed to any value $\gamma > 2$ to obtain a result of a similar flavor, at the expense of a more cumbersome analysis. Owing to space constraints, we only sketch the requisite modifications. The key difference for $\gamma >2$ is that the functions being summed in the numerator and denominator of \eqref{eq:eventB2} are now $d^{2-\gamma}$ and $d^{1-\gamma}$, which are non-increasing in $d$ for $\gamma >2$. For such functions, the integral approximation trick borrowed from \cite[Appendix A, p. 1154]{cormen2009introduction} still applies, and consequently, can again be used to derive an upper bound on \eqref{eq:eventB2}. The exact form of the bound is dependent on the specific value of $\gamma$ used, as this determines the form that the integrals ultimately take.

Back to our present case of $\gamma =2$, define the quantities $\eta: = \frac{\beta d_{\max}  - \log \beta}{d_{\max} - \log\Delta}$, and $\beta_{\max}$ to be the largest value of $\beta$ that satisfies $\eta < C_g$.
With Lemma 3.4 in hand, we can establish the following theorem.
\begin{theorem}
Under assumptions (C1) and (C2), there exists a vertex neighborhood of size $|\setN_v| \geq \beta d_{\max}$ and edge-density $\delta(\setN_v) \geq \frac{C_g - \eta}{1-\eta}$, for every choice of $ \beta \in \biggl(\frac{d_{\min}}{d_{\max}},\beta_{\max}\biggr)$.
\end{theorem}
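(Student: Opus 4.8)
The plan is to apply the probabilistic method through a union bound over the two ``bad'' events A and B introduced above. Concretely, fix a density threshold $\alpha \in (0,1)$ and let event A be $\{C_v \leq \alpha\}$ --- the sampled neighborhood has low edge-density, recalling from Lemma 3.1 that $\delta(\setN_v) = C_v$ --- and let event B be $\{v \in \setS_{\bar{d}}\}$, i.e.\ $d_v \leq \bar{d} = \beta d_{\max}$, so the neighborhood is small. Vertices of degree $\leq 1$ have $w_v = 0$ and hence carry zero mass under $p$, so the complement of B is effectively $\{d_v > \bar{d}\}$. The goal is to show $\textrm{Pr}\{A \cup B\} < 1$; the probabilistic method then guarantees $\textrm{Pr}\{A^c \cap B^c\} > 0$, i.e.\ the existence of a single vertex $v$ whose neighborhood simultaneously satisfies $C_v > \alpha$ and $d_v > \beta d_{\max}$ --- precisely the two properties in the statement.

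First I would bound each event separately and combine. Inequality \eqref{eq:eventA} gives $\textrm{Pr}\{A\} \leq \frac{1-C_g}{1-\alpha}$, while Lemma 3.4 gives the \emph{strict} bound $\textrm{Pr}\{B\} < \eta$. The natural move is to select $\alpha$ so that the event-A bound matches $1-\eta$; setting
\begin{equation}
\alpha = \frac{C_g - \eta}{1-\eta}
\end{equation}
and substituting into \eqref{eq:eventA} yields exactly $\textrm{Pr}\{A\} \leq 1-\eta$, since $1 - \alpha = \frac{1-C_g}{1-\eta}$. A union bound then gives
\begin{equation}
\textrm{Pr}\{A \cup B\} \leq \textrm{Pr}\{A\} + \textrm{Pr}\{B\} < (1-\eta) + \eta = 1,
\end{equation}
where the strictness is inherited from Lemma 3.4. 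Hence $\textrm{Pr}\{A^c \cap B^c\} > 0$, and there is a vertex with $\delta(\setN_v) = C_v > \alpha = \frac{C_g-\eta}{1-\eta}$ and $|\setN_v| = d_v > \beta d_{\max}$, which is stronger than (and therefore implies) the claimed bounds.

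The remaining step is to check that this choice of $\alpha$ is admissible, and this is where the constraint $\beta \in \bigl(\frac{d_{\min}}{d_{\max}}, \beta_{\max}\bigr)$ does its work: by the definition of $\beta_{\max}$ as the largest $\beta$ with $\eta < C_g$, the restriction $\beta < \beta_{\max}$ forces $\eta < C_g$, so that $\alpha = \frac{C_g-\eta}{1-\eta}$ is strictly positive and (using $C_g \leq 1$) satisfies $\alpha \leq C_g < 1$, keeping $\alpha$ in the valid range for \eqref{eq:eventA}; simultaneously, the lower endpoint $\frac{d_{\min}}{d_{\max}}$ guarantees $\bar{d} = \beta d_{\max} > d_{\min}$, so that $\setS_{\bar{d}}$ and $\bar{\setD}$ are the nonempty sets on which Lemma 3.4 was proved. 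I expect the only genuinely delicate point to be this balancing of the threshold: one must verify that the specific $\alpha = \frac{C_g-\eta}{1-\eta}$ makes the two tail bounds sum to exactly the boundary value $1$, and then lean on the strict inequality in Lemma 3.4 to push strictly below $1$ without resorting to a limiting argument over sub-threshold values of $\alpha$. Everything else is a direct assembly of \eqref{eq:eventA} and Lemma 3.4.
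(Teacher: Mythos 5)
Your proof is correct and follows essentially the same route as the paper: a union bound over the two bad events, with the threshold chosen as $\alpha = \frac{C_g - \eta}{1-\eta}$ so that the bound from \eqref{eq:eventA} becomes exactly $1-\eta$, and strictness inherited from the strict inequality in Lemma 3.4. If anything, your write-up is slightly more careful than the paper's, since you explicitly handle the zero-mass vertices of degree $\leq 1$ and verify that the chosen $\alpha$ lies in the admissible range.
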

\begin{proof}
 Since $|\setN_v| = d_v, \forall \; v \in \setV$, from Lemma 7 we obtain
\begin{equation}
\textrm{Pr}\{v \in \setS_{\bar{d}}\}  = \textrm{Pr}\{ d_{\min} \leq |\setN_v| \leq \beta d_{\max} \}< \eta.
\end{equation}
Meanwhile, on setting $\alpha := \frac{C_g - \eta}{1-\eta}$ in \eqref{eq:eventA}, we obtain
\begin{equation}
\textrm{Pr}\{\delta(\setN_v) \leq \alpha\} \leq 1- \eta.
\end{equation}
A simple application of the union bound then reveals that the probability of either of the above events occurring is strictly less than $1$, thus implying that the complement ``good'' event occurs with positive probability. Hence, there exists a vertex neighborhood of size $|\setN_v| \geq \beta d_{\max}$ which is at least a $\frac{C_g - \eta}{1-\eta}$ quasi-clique. 
\end{proof}
When $d_{\max}$ is large, then $\eta \approx \beta$, and thus the quasi-clique value (roughly) varies like $\frac{C_g - \beta}{1-\beta}$. In this case, $\beta_{\max} \approx C_g$, with the result that the allowable range of $\beta$ is the interval $ \biggl(\frac{d_{\min}}{d_{\max}},C_g\biggr)$. A limitation of our result is that it does not allow us obtain results for $\beta > C_g$. However, for large $C_g$, we obtain a non-trivial lower bound on the size of $\setN_v$ and its edge-density.
As an illustration of our lower bound for a real graph, please refer to Figure \ref{fig:bound} in the supplement.

Additionally, we point out an interesting fact about vertex neighborhoods: if a neighborhood $\setN_v$ forms a clique on $k$-vertices, then $\setN_v \cup \{v\}$ is a clique on $(k+1)$-vertices, which we designate as an \emph{ego-clique}. The following result asserts that such ego-cliques must be maximal.
\begin{theorem}
Let $\setN_v$ be a clique on $k$-vertices and $\setC_{k+1}(v):=\setN_v \cup \{v\}$ be an ego-clique on $(k+1)$-vertices. Every such ego-clique is maximal.
\end{theorem}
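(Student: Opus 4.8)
The plan is to argue by contradiction, leveraging the one defining property of a neighborhood: every vertex adjacent to $v$ necessarily belongs to $\setN_v$. First I would verify that $\setC_{k+1}(v)$ is genuinely a clique, which is immediate: $v$ is adjacent to every vertex of $\setN_v$ by the very definition of the neighborhood, and the $k$ vertices of $\setN_v$ are pairwise adjacent by hypothesis, so all $\binom{k+1}{2}$ pairs of vertices in $\setN_v \cup \{v\}$ are joined by edges. This part is just bookkeeping and confirms that the object we call an ego-clique deserves the name.

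For maximality, I would suppose toward a contradiction that $\setC_{k+1}(v)$ is \emph{not} maximal, i.e., that there exists a vertex $u \in \setV \setminus \setC_{k+1}(v)$ such that $\setC_{k+1}(v) \cup \{u\}$ is again a clique. Any such extending vertex $u$ must then be adjacent to every vertex of $\setC_{k+1}(v)$; in particular, $(u,v) \in \setE$. The crux is then a single line: since $(u,v) \in \setE$, the definition $\setN_v = \{ w \in \setV : (w,v) \in \setE\}$ forces $u \in \setN_v$, and hence $u \in \setN_v \cup \{v\} = \setC_{k+1}(v)$, contradicting the choice $u \notin \setC_{k+1}(v)$. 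Therefore no extending vertex exists, and the ego-clique is maximal.

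I do not anticipate any genuine obstacle here; the entire content of the statement is the observation that the only candidate vertices for enlarging a clique that contains $v$ are the neighbors of $v$, and all of these already lie inside $\setC_{k+1}(v)$ by construction. In this sense the result is a structural tautology about ego-networks rather than a computation, and the proof reduces to correctly invoking the definition of $\setN_v$ at the key step.
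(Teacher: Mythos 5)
Your proof is correct and follows essentially the same route as the paper: assume the ego-clique sits inside a strictly larger clique, pick a vertex $u$ of that larger clique outside $\setC_{k+1}(v)$, note that $u$ must be adjacent to $v$, and conclude $u \in \setN_v \subset \setC_{k+1}(v)$, a contradiction. The only cosmetic difference is that you phrase non-maximality as extension by a single vertex while the paper posits a larger containing clique, but the key step invoking the definition of $\setN_v$ is identical.
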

\begin{proof}
Assume the contrary, i.e., that there exists a clique $\setC_{\ell} \subset \setV$ on $\ell$-vertices such that $\ell > k+1 $ and $\setC_{k+1}(v) \subset \setC_{\ell}$. Then, there exists a vertex $u \in \setC_{\ell} \setminus \setC_{k+1}(v)$ which is one-hop away from $v$, since $v \in \setC_{\ell}$. This implies that $u \in \setN(v) \subset \setC_{k+1}(v)$, which is a contradiction. 
\end{proof}

\section{Experimental Evaluation}
In this section, we devise a series of experiments on a variety of datasets that aims to address the following questions: (a) Do dense vertex neighborhoods of non-trivial sizes exist in real-world graphs?  (b) How does the approach fare in comparison to dedicated algorithms for dense subgraph discovery?

\subsection{Datasets}
 The list of datasets used and a summary of their statistics are presented in Table \ref{tab:stats}. If the original graph is directed, a symmetrization step is first performed. Unless specified, the datasets were obtained from \cite{snapnets}, and can be classified as follows:
\begin{enumerate}
    \item[(A)] \textbf{Co-authorship graphs:} The vertices denote scientists, and the edges represent collaborations between co-authors of a scientific publication. The datasets include co-authorship graphs constructed from arXiv submissions in three different scientific disciplines (\textsc{arxiv-HepPh}, \textsc{arxiv-AstroPh} and \textsc{arxiv-CondMat}), as well as larger graphs comprising the largest connected component of the arXiv and DBLP co-authorship graphs (\textsc{arXiv} \cite{esfandiar2010fast} and \textsc{dblp} respectively).   
    \item[(B)] \textbf{Social networks:} The vertices are people, and the edges indicate ``friend'' relationships. The datasets include two different snapshots of the Facebook friendship graph (\textsc{Facebook-A} and \textsc{Facebook-B} \cite{viswanath2009evolution}), friendship networks obtained from a blogging website (\textsc{blogCatalog3} \cite{tang2009relational}), and a location-based social networking website (\textsc{loc-Gowalla}).
    \item[(C)] \textbf{Web graphs:} Vertices are web pages, while the edges denote symmetrized hyperlinks (\textsc{web-Stanford} and \textsc{web-Google}). 
    \item[(D)] \textbf{Miscellaneous:} An assortment of graphs drawn from different domains: a protein-protein interaction network (\textsc{ppi-Human}), an email communications network (\textsc{email-Enron}), a router graph (\textsc{router-Caida} \cite{davis2011university}), and an item-item co-purchase network (\textsc{Amazon}).
\end{enumerate}
\subsection{Assessing the Quality of Neighborhood Subgraphs}

Given a dataset, we first compute the edge-density of all vertex neighborhoods. This requires calculating the local clustering coefficient of every vertex, which can be accomplished by  triangle counting - a task that incurs a worst-case complexity of $O(m^{3/2})$ for a graph with $m$ edges. For our purposes, we employed the MAximal Clique Enumerator (MACE) algorithm (the C code of which is publicly available at \cite{MACEcode2015}) to obtain triangle counts. 

Next, for every unique degree in the graph, we compute the highest neighborhood edge-density score over all vertices of that degree and display this information on a plot versus the log of the unique degrees. We designate such a plot as the \emph{neighborhood density profile} (NDP) of a graph, which is shown for six datasets in Figure \ref{fig:ndp}. The NDP plots in the first column represent graphs with high global clustering coefficients, which serve as good test beds for our working hypothesis that vertex neighborhoods are dense subgraphs.
Meanwhile, the graphs in the second column possess very low global clustering coefficients, and illustrate the outcome when our sufficient conditions for high neighborhood edge-density are not met.
In each NDP plot, we mark the largest degree $d_{\textrm{max}}$ by a vertical magenta line, the size of the largest clique discovered by the \textsc{greedyOQC} algorithm (for comparison) using a vertical red line, while the global clustering coefficient is highlighted using a black horizontal line. 
\begin{table}
  \caption{\footnotesize Summary of graph statistics: the number of vertices ($n$), the number of edges ($m$), the largest degree ($d_{\textrm{max}}$), the global clustering coefficient ($C_g$), and the mean local clustering coefficient $\bar{C}$.}
  \label{tab:stats}
  \footnotesize
  \begin{tabular}{cccccl}
    \toprule
    Graph & $n$ & $m$ & $d_{\textrm{max}}$ & $C_g$ & $\bar{C}$\\
    \midrule
    \textsc{arXiv-HepPh} & 12,008 & 112K & 491 & 0.659 & 0.612 \\
    \textsc{arXiv-AstroPh} & 18,772 & 198K & 504 & 0.318 & 0.677\\
    \textsc{arXiv-CondMat} & 23,133 & 93,497 & 279 & 0.264 & 0.633\\
    \textsc{arXiv} & 86,376 & 517K & 1,253 & 0.560 & 0.678\\
    \textsc{dblp} & 317K & 1.05M & 343 & 0.306 & 0.632\\
    \midrule
    \textsc{Facebook-A} & 4,039 & 88,234 & 1,045 & 0.519 & 0.605\\
    \textsc{blogCatalog3} & 10,312 & 333K & 3,992 & 0.091 & 0.463\\
    \textsc{Facebook-B} & 63,731 & 817K & 1,098 & 0.148 & 0.221\\
    \textsc{loc-Gowalla} & 196K & 950K & 14,730 & 0.023 & 0.237 \\
    \midrule
    \textsc{web-Stanford} & 281K & 2.31M & 38,625 & 0.008 & 0.598\\
    \textsc{web-Google} & 875K & 5.10M & 6,332 & 0.055 & 0.514\\
    \midrule
    \textsc{ppi-Human} & 21,557 & 342K & 2,130 & 0.119 & 0.207 \\
    \textsc{email-Enron} & 36,692 & 183K & 1,383 & 0.085 & 0.497\\
    \textsc{router-Caida} & 192K & 609K & 1,071 & 0.061 & 0.157\\
    \textsc{Amazon} & 334K & 923K & 549 & 0.205 & 0.397\\
  \bottomrule
\end{tabular}
\end{table}
A feature common to all NDP plots is that the neighborhood edge-density decreases with increase in  degree, which follows from the fact that the local clustering coefficient of a vertex is inversely proportional to the square of its degree. However, when the global clustering coefficient of the graph is large, from the NDP plots in the first column, it is evident that vertex neighborhoods themselves constitute large (relative to the largest degree $d_{\textrm{max}}$), dense subgraphs. In fact, it can be observed that several neighborhoods $\setN(v)$ attain an edge-density equal to 1, i.e., they form a clique. Recalling the result of Theorem 3.6, it then follows that for the \textsc{arXiv-HepPh}, and \textsc{dblp} datasets, inspecting vertex neighborhoods alone surprisingly reveals maximal cliques of non-trivial sizes.
Furthermore, for these datasets, the size of the largest such ego-clique matches the result obtained using the \textsc{greedyOQC} algorithm. On the other hand, for the $\textsc{facebook-A}$ dataset, the size of the largest ego-clique is roughly $6-$times smaller than that obtained by \textsc{greedyOQC}. However, it can be seen that there do exist vertex neighborhoods of size comparable to that of the clique discovered by \textsc{greedyOQC}, which are $0.9$-quasi-cliques, and thus, are also substantially dense. Taken together, the NDP plots in the first column of Figure \ref{fig:ndp} provide empirical validation of our hypothesis that graphs with power-law degree distributions and high global clustering coefficients harbor large, dense neighborhood subgraphs. 
\begin{figure}[t!]
	\includegraphics[width = 0.23\textwidth]{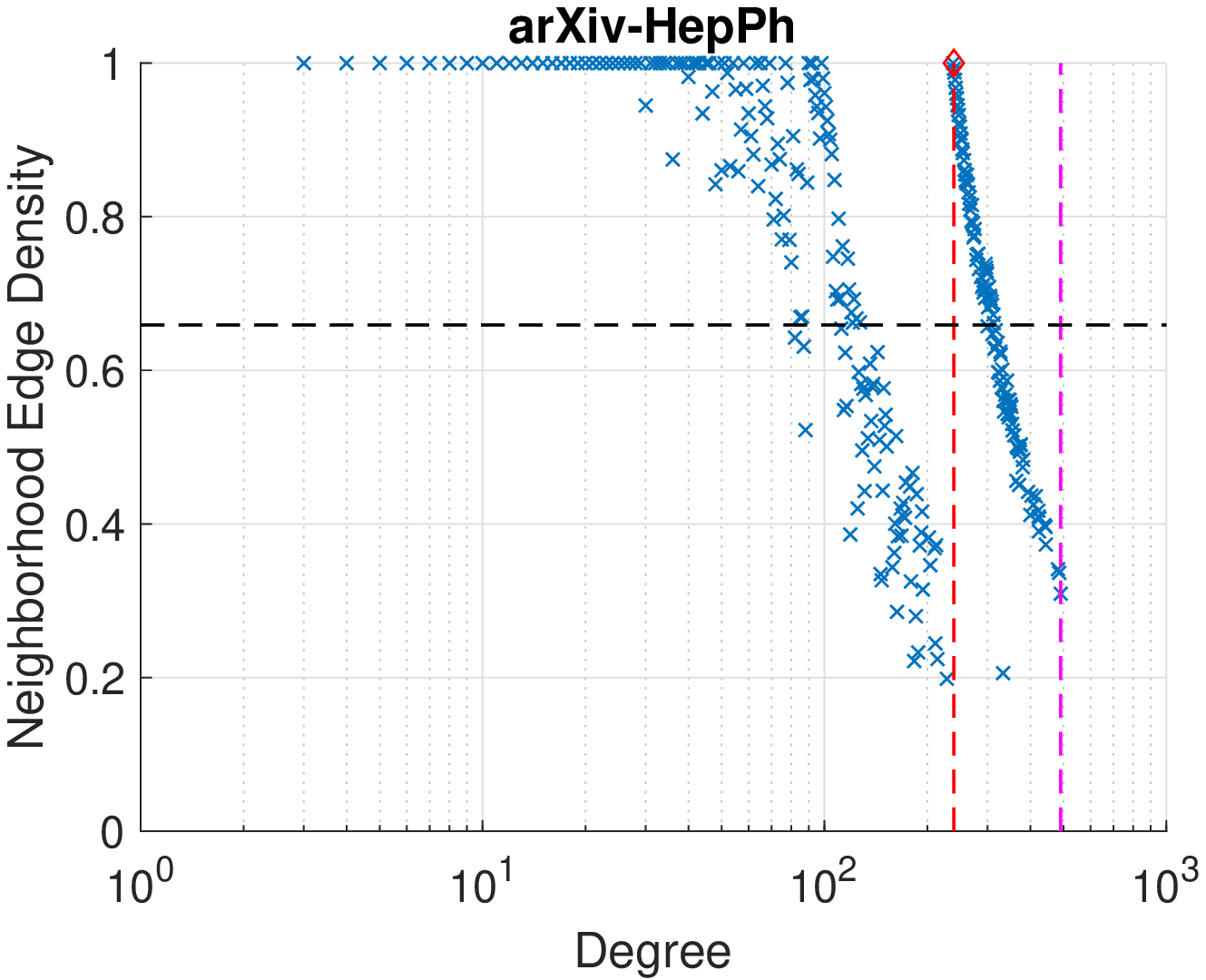}
	\includegraphics[width = 0.23 \textwidth]{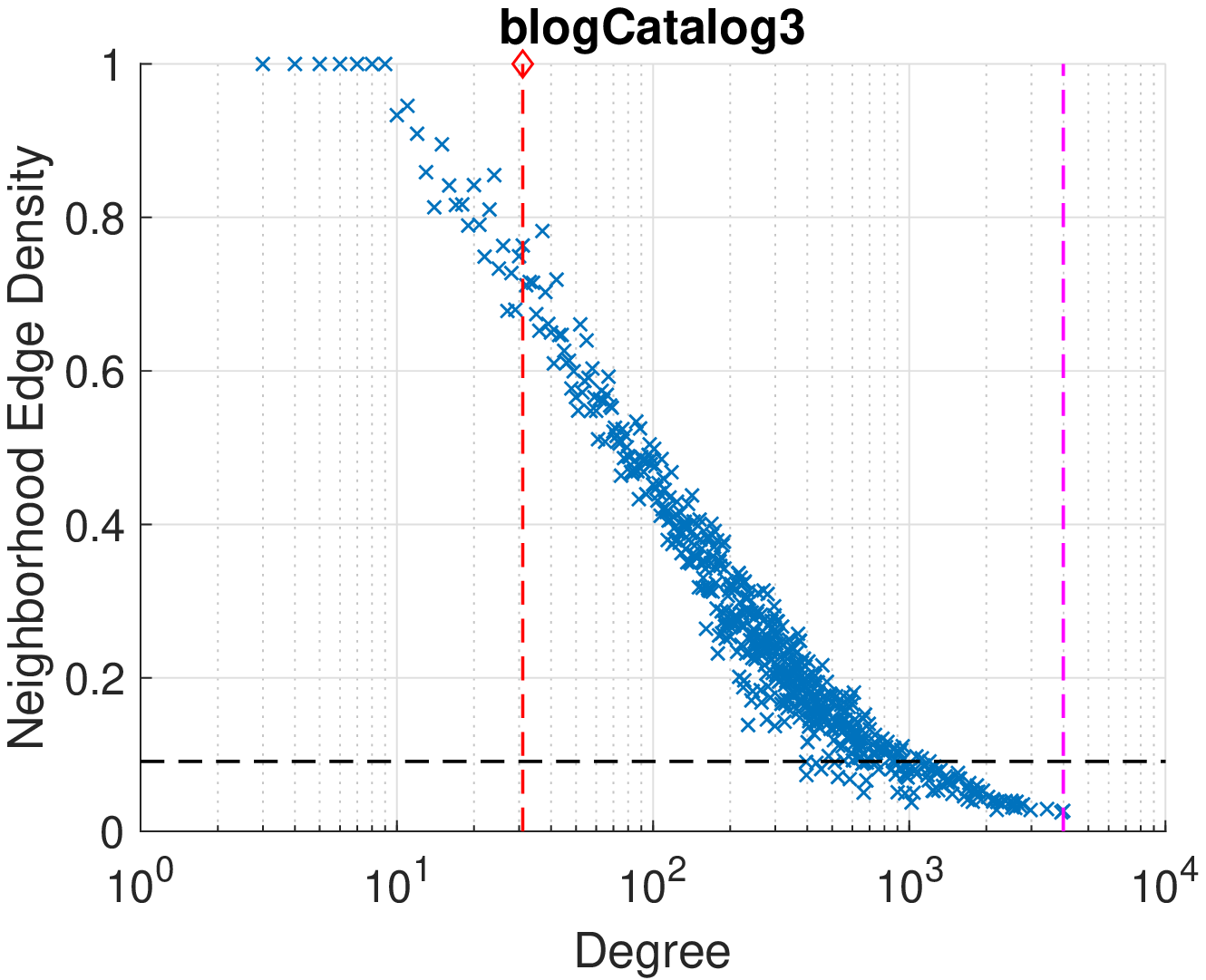}
	\includegraphics[width = 0.23 \textwidth]{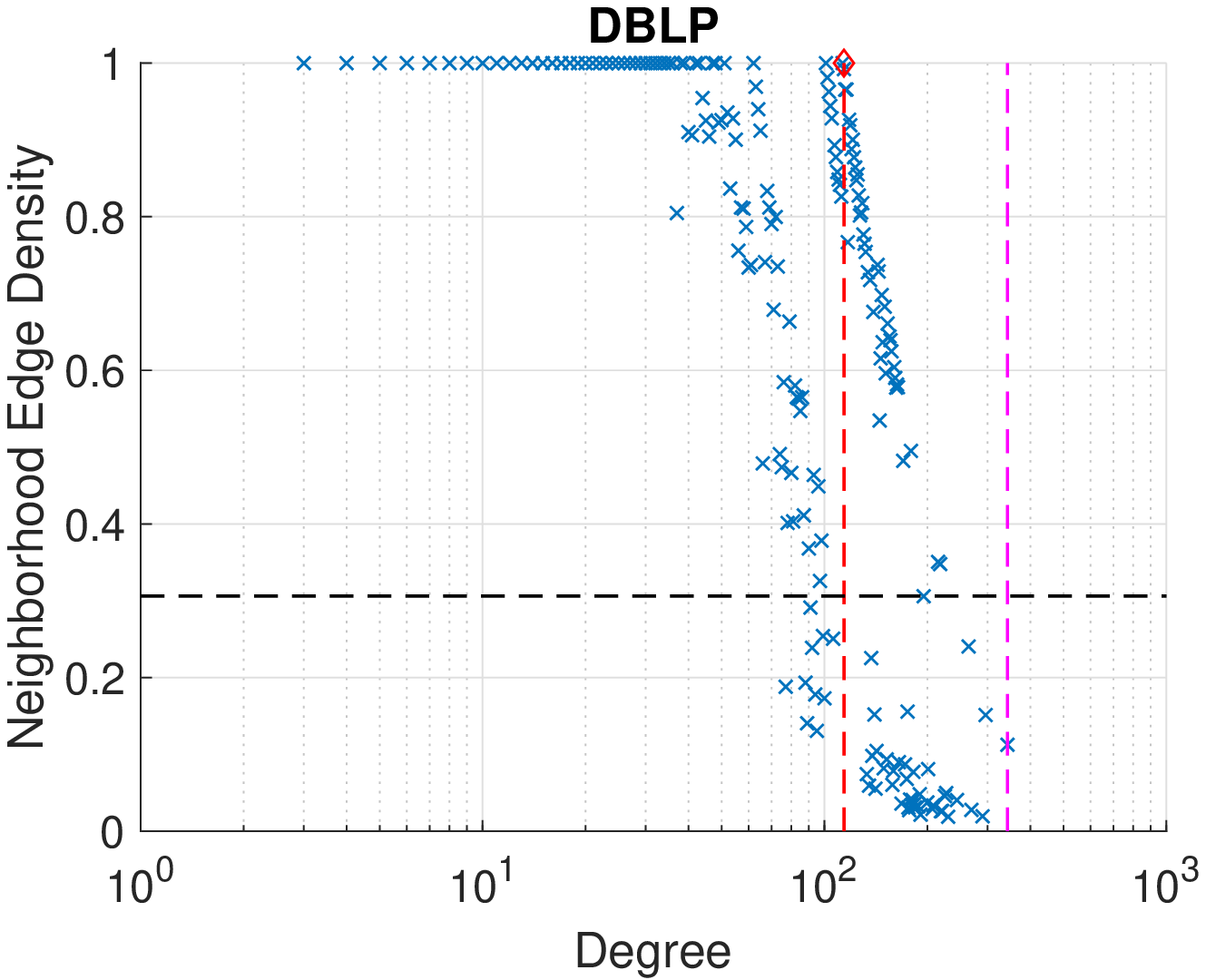}
	\includegraphics[width = 0.23\textwidth]{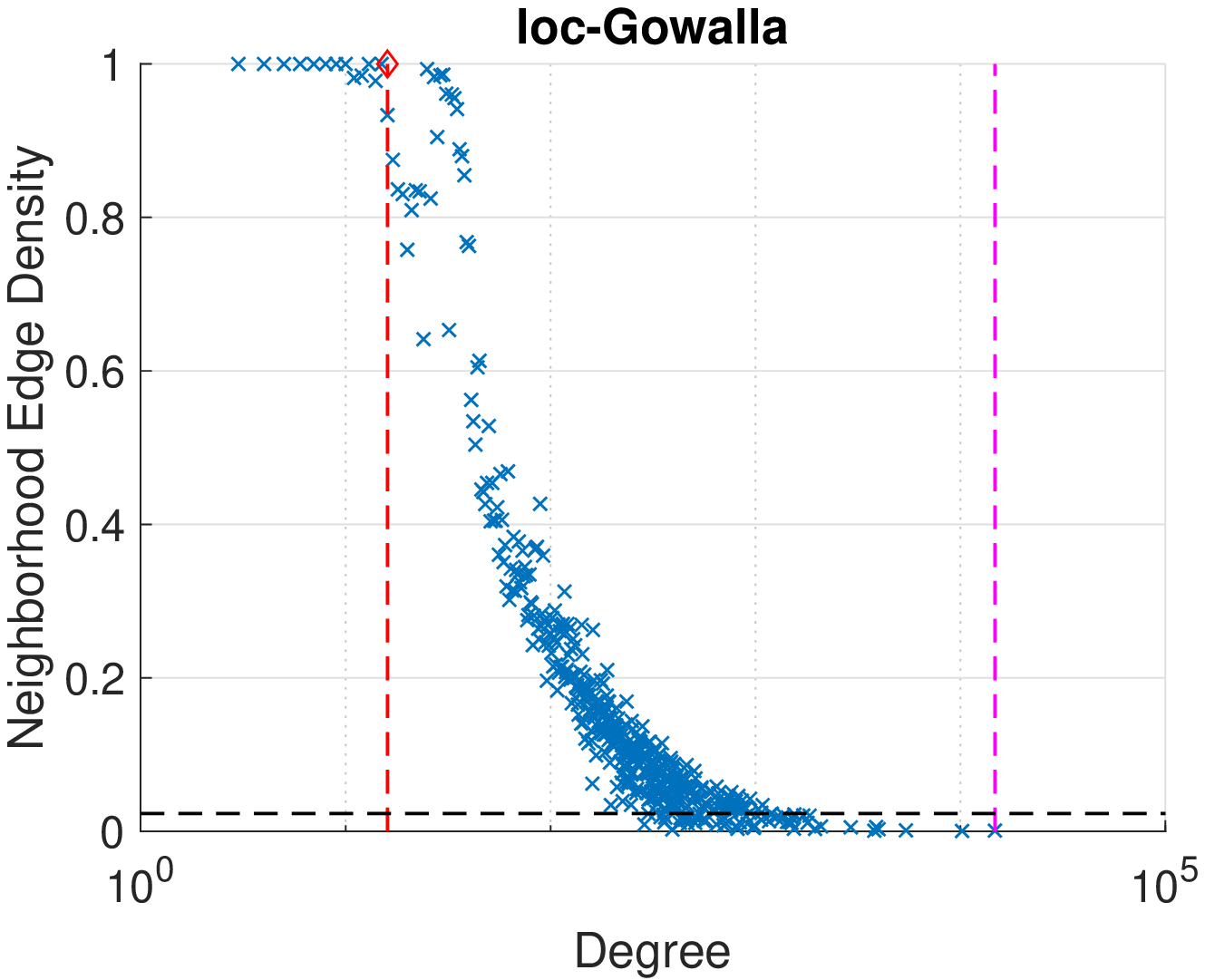}
	\includegraphics[width = 0.23 \textwidth]{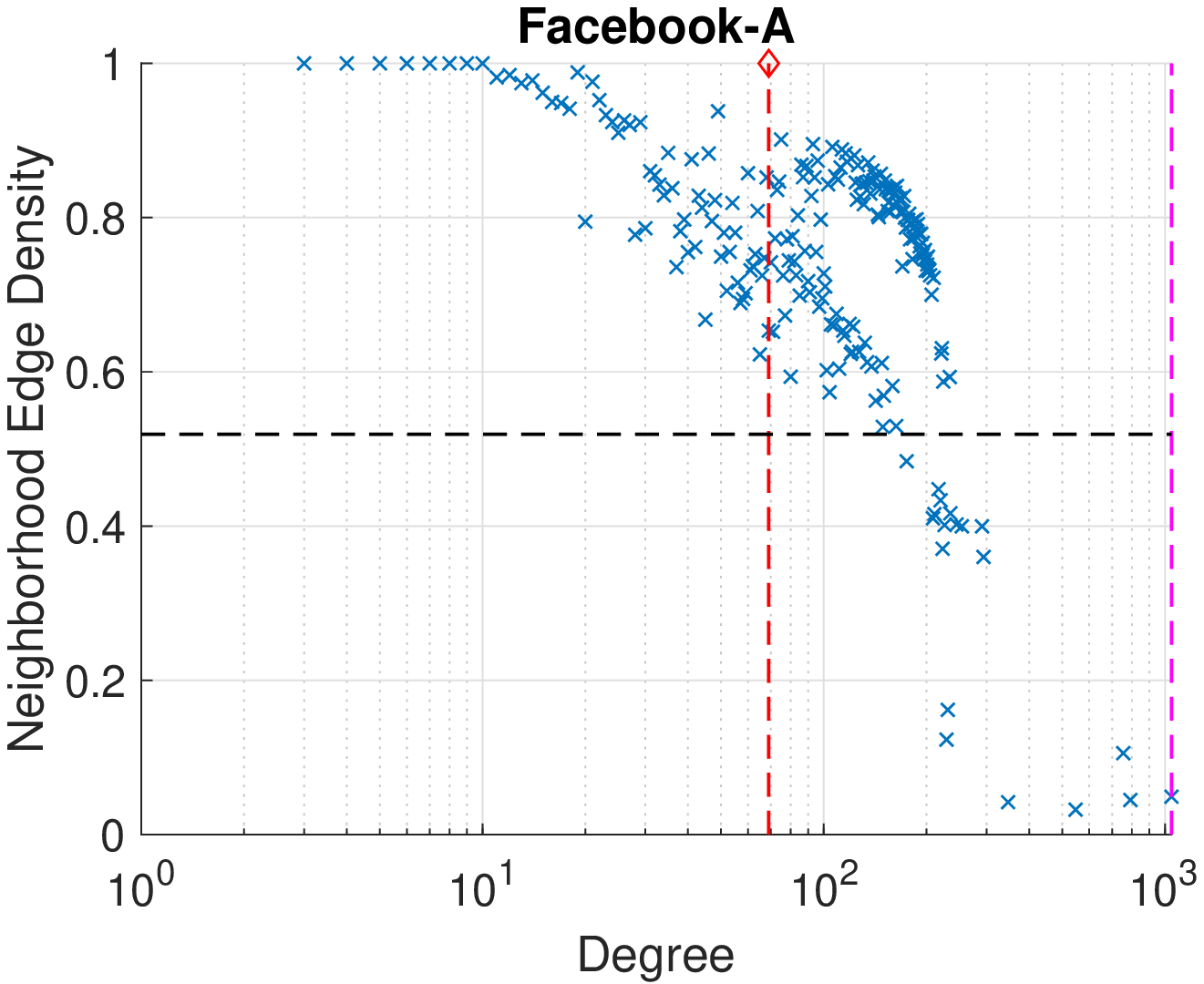}
	\includegraphics[width = 0.23 \textwidth]{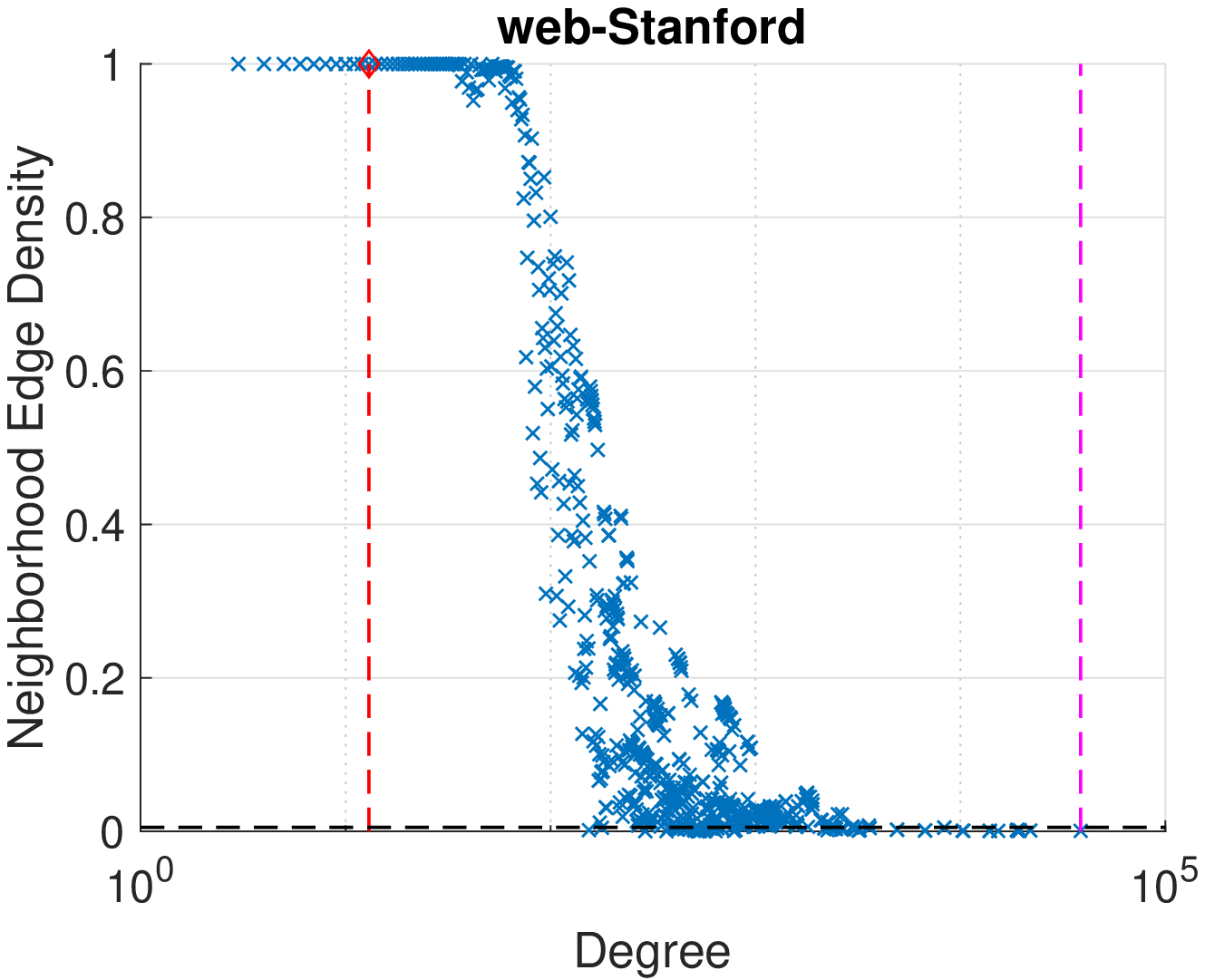}
    \caption{\footnotesize The Neighborhood Density Profile of six real-world graphs. Each plot depicts the maximum of the edge-density of vertex neighborhoods of a given degree versus the log of the degrees. Horizontal black line -- global clustering coefficient ($C_g$), {\color{red} red} vertical line -- densest subgraph returned by the \textsc{greedyOQC} algorithm, and the {\color{magenta} magenta} vertical line -- largest degree $d_\textrm{max}$. The graphs in the first column figures have high $C_g$ values, while the ones in the second column have small $C_g$ values.}
    \label{fig:ndp}
\end{figure}

We now turn our attention to the second column, where $C_g$ is small. In this case, note that the size of the densest neighborhood subgraph is small with respect to the largest degree $d_{\textrm{max}}$. In particular, for the \textsc{blogCatalog3} graph, the NDP reveals that the neighborhood edge-density decays quickly with the degree. This represents the worst-case scenario, where the vertex neighborhoods themselves are not appealing candidates for being dense subgraphs of non-trivial sizes. On the other hand, for the graphs \textsc{loc-Gowalla}, and \textsc{web-Stanford}, 
there are a few dense vertex neighborhoods which form small (relative to $d_\textrm{max}$) subgraphs of non-trivial sizes, and represent atypical or ``anomalous'' regions of the graph. Note that in terms of quality, on the \textsc{loc-Gowalla} graph, the densest vertex neighborhood is near-optimal in terms of size and edge-density compared to the solution returned by \textsc{greedyOQC}, while on the  \textsc{web-Stanford} graph, the largest ego-clique is 4 times larger in size compared to the clique computed by \textsc{greedyOQC}. 

In summary, the NDP of a graph is very informative in assessing the edge-density of neighborhood subgraphs. It reveals the presence of large, dense neighborhood subgraphs in real-world graphs with power-law degree distributions and high global clustering coefficients, thereby confirming the essence of the result provided by Theorem 3.5. Moreover, it illustrates that graphs exhibiting the aforementioned traits often feature the surprising attribute that neighborhoods themselves constitute maximal cliques of non-trivial sizes, with the size of the largest clique being the same as that determined by \textsc{greedyOQC}, which is a non-neighborhood based method. On the other hand, it also showcases that when $C_g$ is small, then neighborhood subgraphs may still form small, dense subgraphs of non-trivial sizes. That being said, there also exist unfavorable instances where neighborhood subgraphs are dense only on a very small scale. The following section explores ways of using such neighborhood subgraphs as seeds for a local-search method in order to grow dense subgraphs of larger sizes.

\begin{table}[b!]
	\caption{\footnotesize Quality of subgraphs obtained via core decomposition and selecting neighborhoods based on average degree in terms of their size $|\setS|$ and edge-density $|\delta(\setS)|$.}
	\footnotesize
	\label{tab:results2}
	\begin{tabular}{cccccl}
		\toprule
		&  \multicolumn{2}{c}{Core decomposition} &  & \multicolumn{2}{c}{Avg. degree}\\
		\cmidrule{2-3} \cmidrule{5-6}
		Graph & $|\setS|$ & $\delta(\setS)$ & &  $|\setS|$ &  $\delta(\setS)$\\
		\midrule
		\textsc{arXiv-AstroPh} & 57 & 1 & & 81 & 0.75 \\
		\textsc{arXiv} & 146 & 0.49 & & 147 & 0.52 \\
		\textsc{blogCatalog3} & 447 & 0.4 & & 1550 & 0.08 \\
		\textsc{Facebook-B} & 699 & 0.12 & & 723 & 0.07 \\
		\textsc{loc-Gowalla} & 183 & 0.41 & & 162 & 0.27\\
		\textsc{web-Stanford} & 387 & 0.29 & & 694 & 0.17 \\
		\textsc{router-Caida} & 92 & 0.45 & &  91 & 0.31 \\		
		\textsc{Amazon} & 497 & 0.013 & & 47 & 0.20 \\
		\bottomrule
	\end{tabular}
\end{table}
\subsection{Growing Dense Subgraphs from Vertex Neighborhoods}

In this section, we describe how the \textsc{localSearchOQC} algorithm of \cite{tsourakakis2013denser} can be used to refine the quality of vertex neighborhoods. Given an initial seed set $\setS_0 \subseteq \setV$, the \textsc{localSearchOQC} algorithm aims to maximize the edge-surplus objective function
\begin{equation}\label{eq:OQC}
f_{\alpha}(\setS): = e(\setS) - \alpha\binom{|\setS|}{2}
\end{equation}
by searching for vertices, which when added or deleted from the current solution set, yields an improvement in the objective function. The procedure is continued until a locally optimal solution is found, (i.e., until addition or deletion of a single vertex from the solution set does not lead to an improvement in the objective), or a maximum number of iterations $T_{\max}$ are reached. While the algorithm has a low run-time complexity of $O(mT_{\max})$,  its performance is particularly sensitive to the choice of initialization $\setS_0 \subseteq \setV$ as the objective function $f_{\alpha}(\setS)$ is difficult to maximize (globally). In that regard, we provide compelling empirical evidence that selecting vertex neighborhoods (on the basis of their clustering coefficients) constitutes good seeds for \textsc{localSearchOQC}. We devised a pair of simple strategies for judiciously selecting seed sets via this metric - owing to space limitations, the full details are provided in the supplement (see strategies \textbf{(S1)} and \textbf{(S2)}). 

In order to provide empirical justification for our choice, we performed a comparison against a pair of low-complexity alternatives for obtaining seed sets. These are (i) computing the core decomposition of the graph \cite{seidman1983network}, and (ii) selecting vertex neighborhoods on the basis of their average degree. The first choice is motivated by the fact that under the same general assumptions made in Theorem 3.5, a result of a similar flavor has been established in \cite{gleich2012vertex} regarding the existence of a dense core, and that the core decomposition can be computed efficiently in linear-time \cite{batagelj2003m}. As the procedure generates a hierarchy of nested subgraphs, we used the final subgraph in the hierarchy (which is the smallest in size and the densest) as a candidate seed. The second choice was proposed in \cite{tsourakakis2013denser} to initialize \textsc{localSearchOQC}; i.e., the neighborhood with the highest average degree is selected as the seed. Note that computing the average degree of a vertex neighborhood incurs the same complexity as computing clustering coefficients. However, selecting neighborhoods via this metric presently lacks theoretical justification, in contrast to ours. 
The quality of the best seeds obtained by the alternatives is depicted in Table \ref{tab:results2} -- these results are representative of both the best and worst outcomes. Meanwhile, the quality of the best neighborhood obtained obtained by employing strategy \textbf{(S2)} is depicted in Table \ref{tab:results} (see columns under Quasi-cliques with heading ``NB"). It is evident that our neighborhood selection strategy consistently yields seeds that are of considerably higher quality compared to those obtained via the alternatives (in terms of both size and edge-density). We conclude that our mechanism of generating seeds is well suited for providing high-quality initializations for \textsc{localSearchOQC} on real-world data compared to the prevailing baselines. 
Following the suggestion of \cite{tsourakakis2013denser}, we set the maximum number of iterations $T_{\max} = 50$ in our experiments.Apart from the choice of the initial seed set $\setS_0$, the performance of the algorithm is also dependent on the choice of the parameter $\alpha \in (0,1]$. The recommendation of \cite{tsourakakis2013denser} is to set $\alpha = 1/3$. However, we observed that in practice, on many graphs, the performance of the algorithm with neighborhood seeding can be significantly improved by simply increasing $\alpha$ to much larger values. 
 For a more thorough discussion on selecting $\alpha$, please refer to the supplement.


\subsection{Main Results and Discussion}

We compared our approach against two non-neighborhood based methods -- the \textsc{greedyOQC} algorithm of \cite{tsourakakis2013denser} and a sophisticated flow-based algorithm proposed in \cite{mitzenmacher2015scalable} for efficently computing the $k$-clique densest subgraph \cite{tsourakakis2015k}. For the former algorithm, which employs greedy vertex peeling to maximize the OQC function \eqref{eq:OQC} and runs in linear time $O(m+n)$, we used a value of $\alpha > 1/3$, as it substantially improves upon the performance reported in \cite{tsourakakis2013denser} (see supplement for an example). Meanwhile, for a given integer $k \geq 3$, the latter method requires a list of all $k$-cliques in the graph as input. For fair comparison, we used $k=3$, which reduces to listing triangles, that we already obtained using MACE for computing clustering coefficients. Note that for this choice of $k$, the algorithm aims to compute the triangle-densest subgraph (TDS). We used the C- based implementation developed by the authors of \cite{mitzenmacher2015scalable} that is publicly available at \cite{flowcode2015} to obtain our results. 

We summarize the outcomes of our experiments across all datasets in Table \ref{tab:results}, which displays the size of the largest clique obtained by each method on each dataset, along with the ``best'' quasi-clique (i.e., the densest subgraph that is not a clique). The algorithm of \cite{mitzenmacher2015scalable} does not have any parameter to tune, and hence, we simply display the  obtained results. For \textsc{greedyOQC}, we report the largest clique obtained by setting $\alpha=1$. Meanwhile, for \textsc{localSearchOQC}, the cliques were obtained using the neighborhood seed sets \textbf{(S1)} and $\alpha = 1$, while the quasi-cliques were recovered using the neighborhood seed sets \textbf{(S2)}. We compared the quasi-cliques returned by the different methods on the basis of their size, edge-density and triangle-density. For fair comparison, we report the quasi-cliques obtained by each method for $\alpha = 0.9$ -- if a method returned a clique for this choice of $\alpha$, we used the next smaller value of $\alpha \in \{0.7,0.75,0.8,0.85\}$ for which a quasi-clique is obtained. If no quasi-clique is returned by a method for any choice of $\alpha$, we leave a blank in its corresponding location
in the table. Our main findings can be summarized as follows:
\begin{enumerate}
    \item The best neighborhood (\emph{without} refinement) is, in general, of much higher quality compared to the TDS computed by \cite{mitzenmacher2015scalable}, which requires triangle-listing as a pre-processing step. Furthermore, there always exists a high quality neighborhood quasi-clique (with $\alpha \geq 0.92$ in all but one case) of substantial size - refinement via \textsc{localSearchOQC} mainly yields a similar sized subgraph with improved triangle-density. Overall, these results provide empirical validation of our hypothesis that real-world graphs contain high-quality dense neighborhood subgraphs of non-trivial sizes.
      
    \item The \textsc{greedyOQC} algorithm (with appropriate tuning) is well-suited for clique discovery in general. However, on $6/15$ datasets, the largest clique discovered by \textsc{greedyOQC} \emph{and} \textsc{localSearchOQC} is no better than the largest ego-clique. On the remaining datasets, while the largest ego-clique can be small relative to \textsc{greedyOQC}, by using neighborhoods as seeds for \textsc{localSearchOQC}, we can discover a clique of comparable, or even larger size.
    \item Regarding the performance of \textsc{localSearchOQC} and \textsc{greedyOQC}, while both methods recover quasi-cliques of high quality, the former algorithm has a tendency to produce ``denser'' quasi-cliques of higher triangle density compared to the latter method. 
    \item On $7/15$ datasets (particularly, on collaboration networks), we observed that  \textsc{greedyOQC} produces a clique, but not any dense quasi-cliques, with the algorithm becoming ``stuck'' at the same clique for all choices of $\alpha$. Such an undesirable  behavior was not observed for \textsc{localSearchOQC}.
\end{enumerate}
To conclude, our results indicate that selecting vertex neighborhoods based on their local clustering coefficient reveals dense subgraphs of substantial size, which can be competitive with or even better than those obtained by dedicated methods for dense subgraph discovery. We also demonstrated that such vertex neighborhoods are good seeds for \textsc{localSearchOQC}, being substantially better overall than seeds obtained via other simple alternatives such as the core decomposition or choosing neighborhoods with large average degree. Further refining neighborhoods with this simple algorithm allows us to consistently obtain both cliques and quasi-cliques of even higher quality compared to the baselines across a wide variety of heterogeneous datasets.


\begin{table*}
  \caption{\footnotesize Single best clique and quasi-clique computed by each method. The second column displays the clique size (the larger the better), while the last 3 columns display the quality of quasi-cliques as measured by their size $|\setS|$, edge-density $\delta(\setS) = e(\setS)/\binom{|\setS|}{2}$ and triangle-density  $\tau(\setS) = t(\setS)/\binom{|\setS|}{3}$ (here, $t(\setS)$ is the number of triangles in subgraph $\setS$). NB - neighborhood, NB+LS - local search with neighborhood seeds, GRDY - greedyOQC, TDS - flow based algorithm of \cite{mitzenmacher2015scalable} for computing the triangle densest subgraph.}  
  \footnotesize
  \label{tab:results}
  \begin{tabular}{ccccccccccccccccccc}
    \toprule
    &  \multicolumn{3}{c}{Cliques} & & \multicolumn{14}{c}{Quasi-cliques} \\
    \cmidrule{2-4} \cmidrule{6-19}
    &   \multicolumn{3}{c}{$|\setS|$} & & \multicolumn{4}{c}{$|\setS|$} & & \multicolumn{4}{c}{$\delta(\setS)$} & & \multicolumn{4}{c}{$\tau(\setS)$} \\
    \cmidrule{2-4} \cmidrule{6-9} \cmidrule{11-14} \cmidrule{16-19}  
    Graph & NB & NB + LS & GRDY & & NB & NB + LS & GRDY & TDS & & NB & NB + LS & GRDY & TDS & & NB & NB + LS & GRDY & TDS \\
    \midrule
    \textsc{arXiv-HepPh} & 239 & 239 & 239 & & 246 & 247 & - & 239 & & 0.95 & 0.95 & - & 1 & & 0.92 & 0.91 & - & 1\\
    \textsc{arXiv-AstroPh} & 57 & 57 & 57 & & 48 & 45 & - & 76 & & 0.90 & 0.99 & - & 0.80 & & 0.83 & 0.97 & - & 0.59\\
    \textsc{arXiv-CondMat} & 23 & \textbf{26} & \textbf{26} & & 19 & 18 & - & 30 & & 0.86 & 0.96 & - & 0.93 & & 0.68 & 0.89 & - & 0.72\\
    \textsc{arXiv} & 74 & 74 & 74 & & 75 & 60 & - & 146 & & 0.95 & 0.98 & - & 0.49 & & 0.92 & 0.94 & - & 0.25\\
    \textsc{dblp} & 114 & 114 & 114 & & 105 & - & - & 114 & & 0.95 & - & - & 1 & & 0.92 & - & - & 1\\
    \midrule
    \textsc{Facebook-A} & 11 & 32 & \textbf{69} & & 50 & 53 & 118 & 195 & & 0.94 & 0.98 & 0.97 & 0.79 & & 0.85 & 0.94 & 0.92 & 0.54\\
    \textsc{blogCatalog3} & 10 & 29 & \textbf{31} & & 12 & 52 & 52 & 621 & & 0.95 & 0.96 & 0.96 & 0.31 & & 0.87 & 0.88 & 0.88 & 0.05\\
    \textsc{Facebook-B} & 12 & \textbf{25} & \textbf{25} & & 20 & 17 & 36 & 198 & & 0.95 & 0.98 & 0.96 & 0.36 & & 0.85 & 0.95 & 0.89 & 0.08\\
    \textsc{loc-Gowalla} & 15 & \textbf{28} & 16  & & 36 & 32 & 23 & 311 & & 0.94 & 0.99 & 0.95 & 0.27 & & 0.85 & 0.97 & 0.86 & 0.04\\
    \midrule
    \textsc{web-Stanford} & 53 & \textbf{53} & 14 & & 71 & 68 & 16 & 684 & & 0.95 & 0.99 & 0.96 & 0.17 & & 0.89 & 0.97 & 0.88 & 0.02\\
    \textsc{web-Google} & 25 & 43 & \textbf{44} & & 54 & 48 & 48 & 66 & & 0.93 & 0.99 & 0.99 & 0.85 & & 0.84 & 0.98 & 0.98 & 0.64\\
    \midrule
    \textsc{ppi-Human} & 81 & \textbf{130} & \textbf{130} & & 81 & - & - & 361 & & 0.93 & - & - & 0.42 & & 0.89 & - & - & 0.14\\
    \textsc{email-Enron} & 10 & \textbf{16} & \textbf{16} & & 14 & 12 & 22 & 388 & & 0.93 & 0.98 & 0.96 & 0.19 & & 0.82 & 0.95 & 0.89 & 0.02\\
    \textsc{router-Caida} & 9 & \textbf{15} & 6 & & 12 & 15 & - & 75 & & 0.92 & 0.97 & - & 0.55 & & 0.94 & 0.99 & 0.95 & 0.20\\
    \textsc{Amazon} & \textbf{7} & \textbf{7} & 5 & & 7 & 8 & 7 & 50 & & 0.95 & 0.96 & 0.90 & 0.19 & & 0.86 & 0.90 & 0.72 & 0.02\\
  \bottomrule
\end{tabular}
\end{table*}
\section{Conclusions}

Our main aim in this paper was to draw attention to the fact that real-world graphs harbor dense vertex neighborhoods of non-trivial sizes, which are often of comparable or higher quality relative to those discovered by dedicated algorithms for maximizing subgraph density. We provided theoretical justification of this phenomenon, in terms of sufficient conditions (namely, a power-law degree distribution and a large global clustering coefficient) under which such a surprising result can be expected in a real-world graph. In practice, our conditions seem to be conservative. We also provided compelling empirical evidence that refining a judiciously chosen neighborhood via a simple local search algorithm delivers state-of-the-art performance at low complexity. This indicates that discovering large cliques and near-cliques is not always hard for real-world graphs, and provides motivation for future work that provides a more refined analysis of these empirical results. 

\section{Acknowledgements}
Supported by the National Science Foundation and the Army Research Office under Grants No. IIS-1908070 and ARO-W911NF1910407 respectively. The authors additionally acknowledge the assistance of Charalampos Tsourakakis and Paris Karakasis in executing \cite{flowcode2015}.
\bibliographystyle{ACM-Reference-Format}
\bibliography{AKNS_KDD2020.bib}

\newpage

\section*{Supplementary Material}

In order to facilitate reproducibility, this section contains a detailed description of the mechanisms used to generate the neighborhood seed sets for initializing \textsc{localSearchOQC}, guidelines for choosing the tuning parameter $\alpha$ in the OQC objective function \eqref{eq:OQC}  for both the \textsc{localSearchOQC} and \textsc{greedyOQC} algorithms, and additional experiments showcasing how the choice of these parameters influences the obtained results.  Additionally, we provide an example to illustrate the quality of the lower bound on the neighborhood quasi-clique value derived in Theorem 3.5 on a real-world graph.

We begin by discussing the choice of $\alpha$ for \textsc{localSearchOQC}. While the recommendation of \cite{tsourakakis2013denser} is to set $\alpha = 1/3$, the algorithm performs much better in practice with a larger value. 
Such a beneficial effect can be partially explained via  the following intuitive argument: consider the case where $\bar{C} > 1/3$ for a given graph $\setG$. Note that the term $\alpha\binom{|\setS|}{2}$ in $f_{\alpha}(\setS)$ can be interpreted as the expected number of edges in a subgraph $\setG_\setS$ of a random Erdos-Renyi graph with edge-density $\alpha$. This random graph model serves as a null model which is used to compare and contrast the number of edges of a subgraph $\setG_\setS$ in the given graph $\setG$. We now point out that $\alpha$ can also be equivalently viewed as the expected local clustering coefficient of a random Erdos-Renyi graph. This observation suggests that given a graph, we can set the value of $\alpha$ to be equal to the average clustering coefficient $\bar{C}$ of $\setG$, as the random Erdos-Renyi graph model will exhibit the same clustering coefficient as $\setG$ on average, and hence, may constitute a more appropriate parameter setting when $\bar{C} > 1/3$. In practice though, we observed that irrespective of the actual value of $\bar{C}$, it never hurts to increase $\alpha$ to a value  larger than $\max\{1/3,\bar{C}\}$. This effect is illustrated via the following two strategies for generating seed sets for the \textsc{localSearchOQC} algorithm.
\begin{enumerate}
    \item[(\textbf{S1}):] In this strategy, from the NDP of a graph, we select all vertices whose neighborhood density lies in the interval $[0.70,0.95]$. On average, this yields a small number of $20-30$ vertices, with the worst-case extremes arising in the case of the $\textsc{facebook-A}$ graph, where $3.5\%$ of the $4,039$ vertices (a total of $153$) where returned and the $\textsc{Amazon}$ graph, where only $4$ vertices were returned. Every such vertex $v$ is then combined with its neighborhood $\setN(v)$ to generate a seed set $\{v\} \cup \setN(v)$, which is used as initialization for the \textsc{localSearchOQC} algorithm with $\alpha = 1$. For this choice of $\alpha$, the edge-surplus objective function $f_{\alpha}(\setS)$ attaches a high penalty to any subset of vertices which do not form a clique, i.e., we ``encourage" the algorithm to discover cliques.
    \item[(\textbf{S2}):] In an alternative strategy, we partition the interval of neighborhood density values $[0.70,0.95)$ into $5$ sets of disjoint, equi-spaced sub-intervals $[0.7,0.75),[0.75,0.8),[0.8,0.85)$,\\
    $[0.85,0.9)$, and $[0.9,0.95)$. Next, we list the vertices of the graph whose neighborhood edge-densities lie in one of these 5 sub-intervals. For graphs with small $C_g$ the size of the list was always $< 1\%$ of the total number of vertices, whereas it was up to $5\%$ for larger $C_g$. From each sub-interval, we select the vertex whose neighborhood subgraph attains the highest edge-surplus value according to \eqref{eq:OQC}, where the parameter $\alpha$ in $f_{\alpha}(\setS)$ is set to the lower bound of the sub-interval; e.g., for the sub-interval $[0.9-0.95)$, $\alpha = 0.9$. This vertex $v$ is then combined with its neighborhood to form the seed set $\{v\} \cup \setN(v)$, which is then used to initialize \textsc{localSearchOQC}, with the same value of $\alpha$ as the sub-interval lower bound. A total of $5$ such seed sets are generated (one for each sub-interval). In this case, our objective is to induce the algorithm to unearth large quasi-cliques.
\end{enumerate}

\noindent The performance of \textsc{localSearchOQC} using the seeding strategy \textbf{(S1)}, is depicted in Figure \ref{fig:S1} on $2$ representative datasets. By setting $\alpha = 1$, \textsc{localSearchOQC} is indeed capable of discovering cliques when initialized from appropriate vertex neighborhoods. While the size of the discovered cliques is smaller than the largest ego-clique for a small number of seeds, the majority of trials produced cliques of larger sizes. We empirically verified that these cliques are maximal, which concurs with our intuition regarding the algorithm, i.e., if the current solution set is a non-maximal clique, by design, the algorithm will seek to add vertices which will produce a larger, maximal clique (note that the extreme setting $\alpha = 1$ discourages any other vertices from being added in this case). A list of these maximal cliques of size larger than the largest ego-clique for the datasets considered are depicted in the right-hand column of Figure \ref{fig:S1}. We point out that on the \textsc{web-Google} dataset, a few seeds produced subgraphs of small size and low density. This illustrates a potential drawback of setting $\alpha = 1$: if the initial seed set is not in the local vicinity of a denser subgraph, then \textsc{localSearchOQC} compensates by seeking out a small subgraph with low density. To appreciate this behavior, we focus on one such seed set of size $60$ and density $0.77$. For this subgraph, the edge-surplus objective function has a value of $-407$. When used as initialization for \textsc{localSearchOQC}, the algorithm yields a subgraph of size $11$ and edge-density $0.18$. However, the objective function $f_{\alpha}(\setS)$ has a value of $-45$, which marks a near $10$-fold improvement over the initial set. While this is a worst-case scenario for such a  ``all-or-nothing'' approach, we observed that it seldom occurs in practice (only $6/32$ trials on the \textsc{web-Google} graph and no such occurrences on the \textsc{Facebook-B} graph). Overall, our experiments indicate that these vertex neighborhoods can indeed serve as favorable
initialization points for discovering maximal cliques using \textsc{localSearchOQC}.

As a performance benchmark, we also added the \textsc{greedyOQC} algorithm of \cite{tsourakakis2013denser}, with $\alpha$ also set equal to $1$. Interestingly, the algorithm \emph{always} produced a clique with this setting on all the datasets we tried. With regard to detecting cliques, Figure \ref{fig:S1} reveals that the performance of \textsc{greedyOQC} is competitive with \textsc{localSearchOQC}. On the \textsc{Facebook-B} graph, \textsc{localSearchOQC} detects $3$ distinct cliques of size $25$, while \textsc{greedyOQC} also discovers a different clique of the same size. Finally, on the  \textsc{web-Google} graph, the size of the largest clique discovered by \textsc{localSearchOQC} is $43$, which is comparable in size to the largest clique on $46$ vertices produced by \textsc{greedyOQC}. We also empirically observed that the clique returned by \textsc{greedyOQC} does not subsume any of the smaller cliques produced by \textsc{localSearchOQC}, thereby highlighting the contrasting nature of the two approaches. 
\begin{figure}[t!]
	\includegraphics[width = 0.23\textwidth]{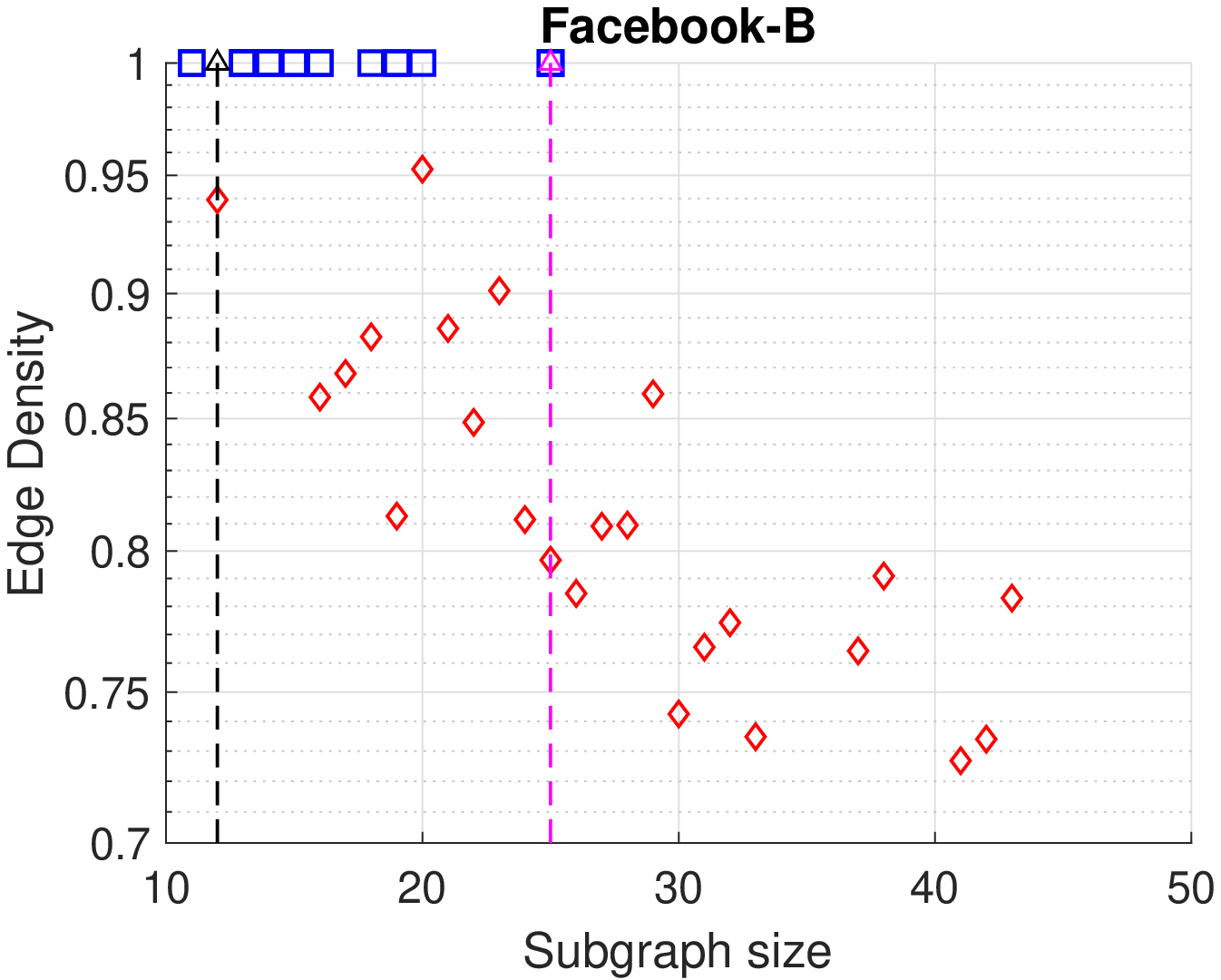}
	\includegraphics[width = 0.23\textwidth]{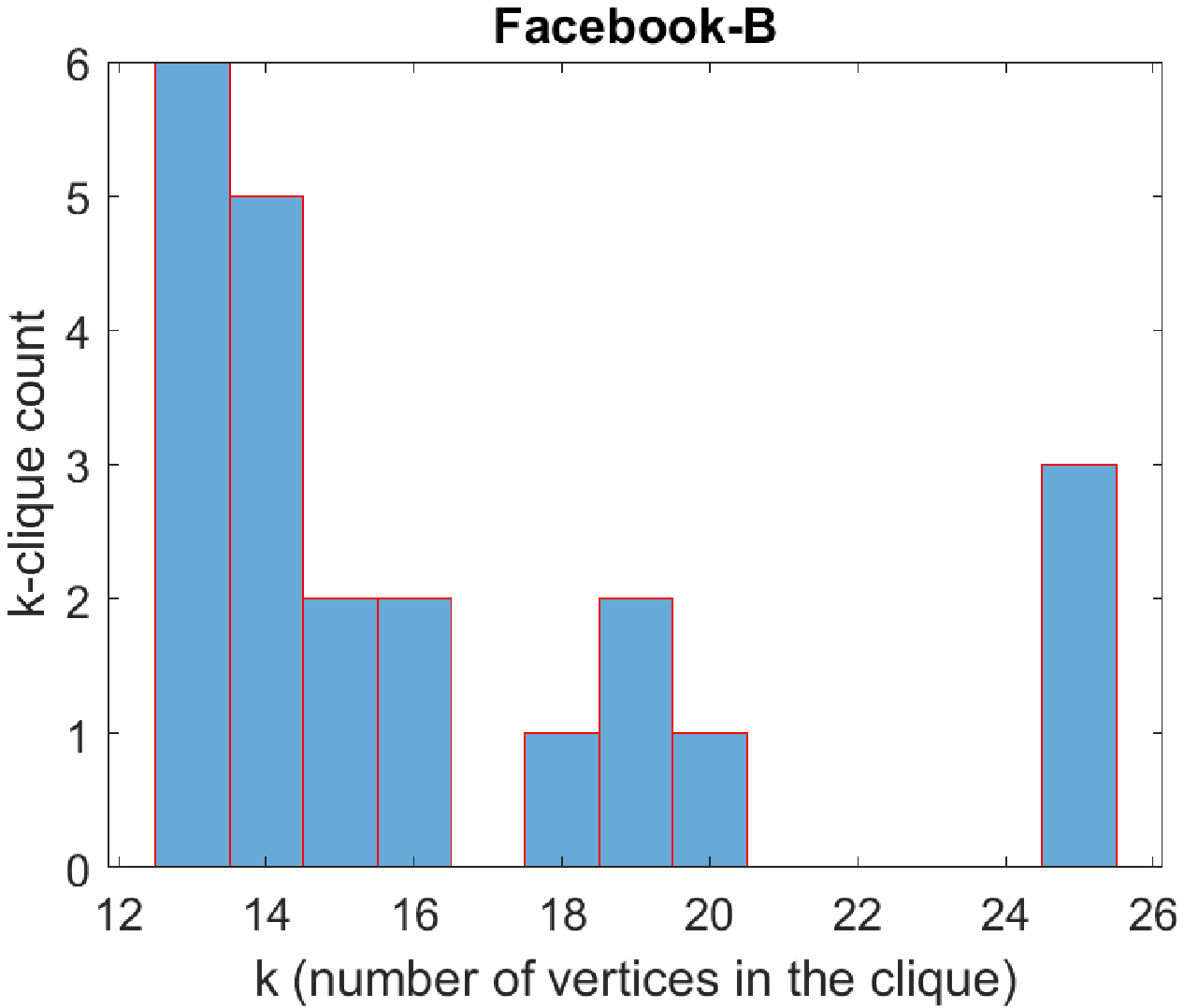}
	\includegraphics[width = 0.23 \textwidth]{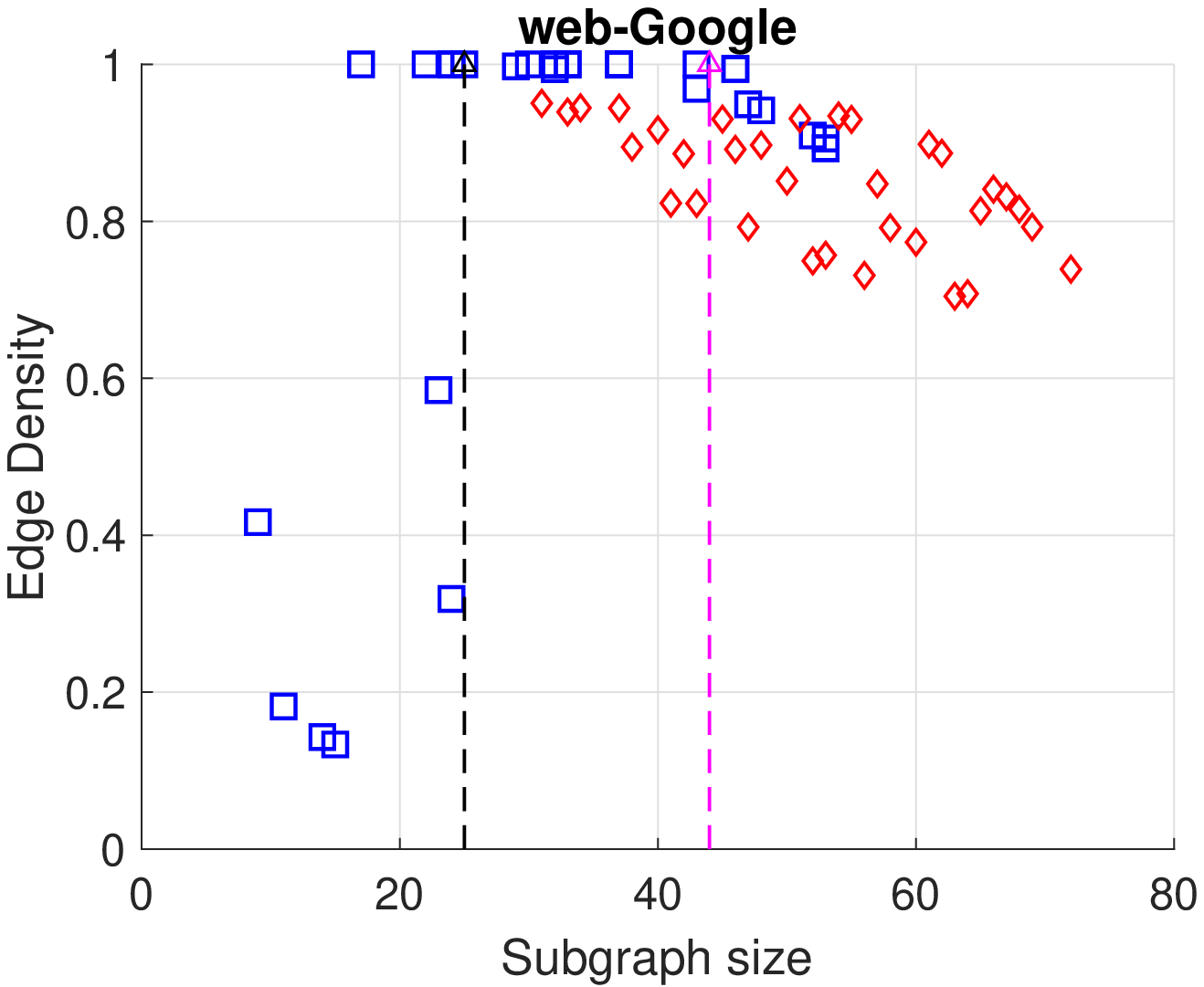}
	\includegraphics[width = 0.23 \textwidth]{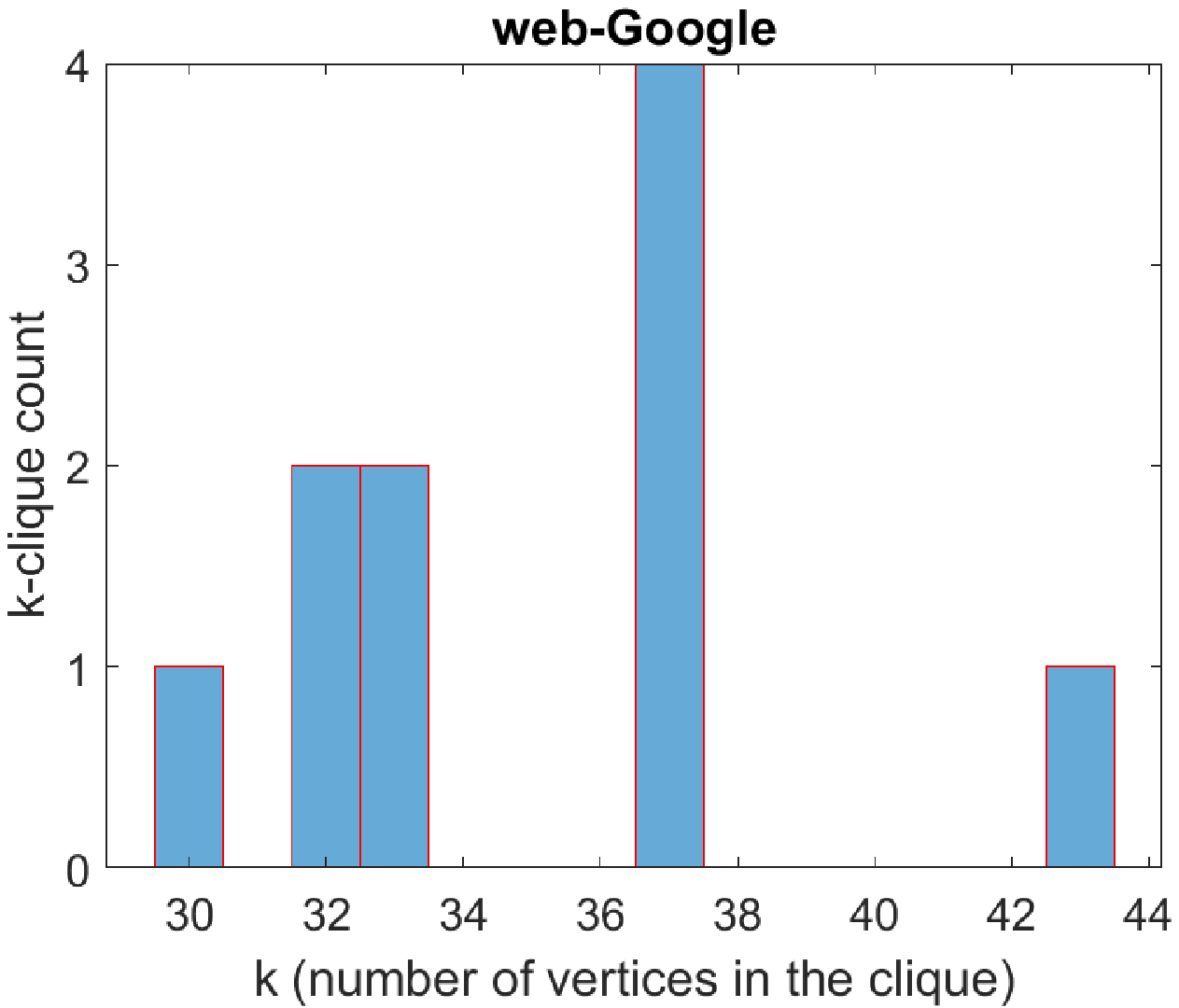}
    \caption{\footnotesize Results of using \textsc{localSearchOQC} with seeds (\textbf{S1}) on three real-world graphs. Left column:
    Edge Density versus subgraph size. The {\color{red} red diamonds} denote the neighborhood subgraphs selected using the seeding strategy (\textbf{S1}), the black vertical line highlights the size of the largest ego-clique, the {\color{blue} blue squares} denote the subgraphs obtained using \textsc{localSearchOQC} with seeds (\textbf{S1}) and $\alpha = 1$, and the {\color{magenta} magenta} vertical line marks the size of the largest clique returned by \textsc{greedyOQC}. Right column: list of $k$-cliques obtained by \textsc{localSearchOQC} of size larger than the largest ego-clique.}  
    \label{fig:S1}
\end{figure}  
\newline \indent We now focus on the effectiveness of \textsc{localSearchOQC} in discovering large quasi-cliques when using the seeding strategy \textbf{(S2)}. We used  \textsc{greedyOQC} again as a benchmark, with the range of parameter settings varying from $\alpha \in \{1/3,0.7,0.75,0.8,0.85,0.9,1\}$, i.e., from the recommended setting $1/3$ to the highest possible value $1$. 
Figure \ref{fig:S2} displays the results of our experiments on $4$ datasets, which are representative of all the possible outcomes that we observed. Regarding the performance of \textsc{greedyOQC}, we point out that the recommended setting of $\alpha = 1/3$ can be \emph{very} sub-optimal with respect to the neighborhood subgraphs we selected. For example, on the \textsc{blogCatalog3} dataset, using $\alpha = 1/3$ outputs a subgraph on $330$ vertices with edge-density $0.5$, which is $33\%$ less dense and $10$ times larger in size than the least-dense neighborhood subgraph obtained. The algorithm demonstrates marked improvement only upon using a more aggressive choice of $\alpha$, with the subgraph size decreasing and the density increasing progressively as $\alpha$ is increased, and ultimately yielding a clique when $\alpha=1$. 
\begin{figure}[t!]
	\includegraphics[width = 0.23\textwidth]{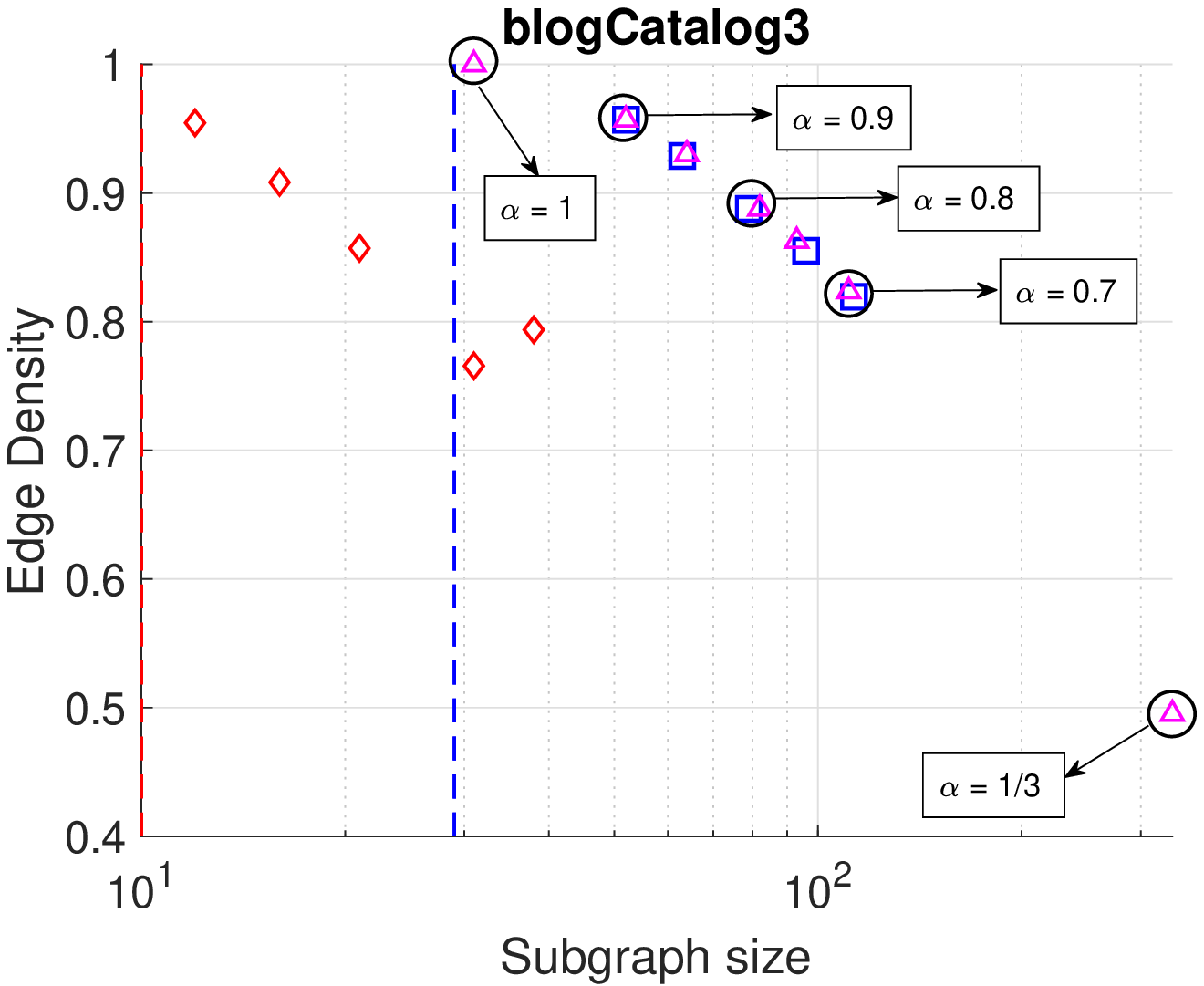}
	\includegraphics[width = 0.23 \textwidth]{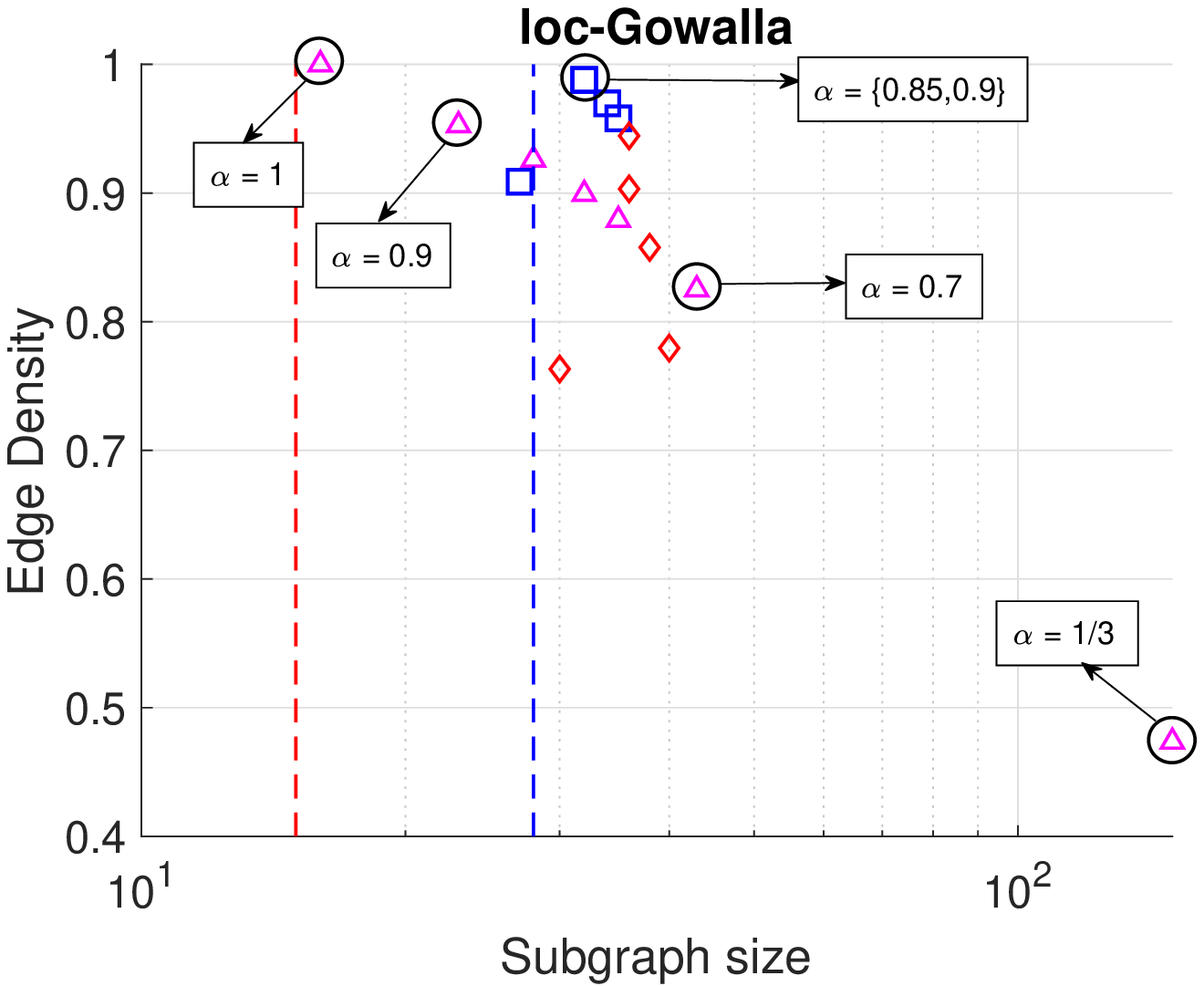}
	\includegraphics[width = 0.23 \textwidth]{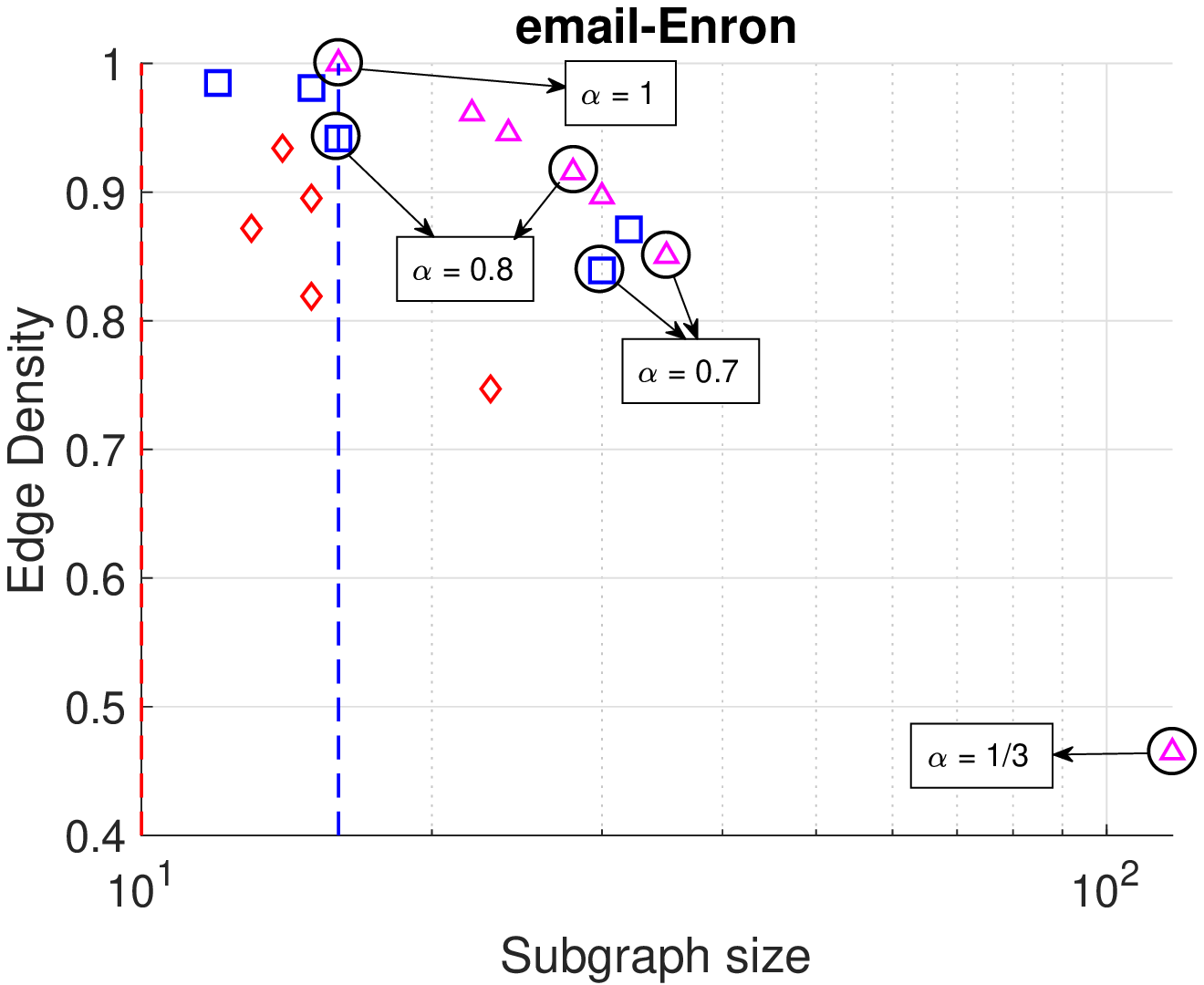}
	\includegraphics[width = 0.23 \textwidth]{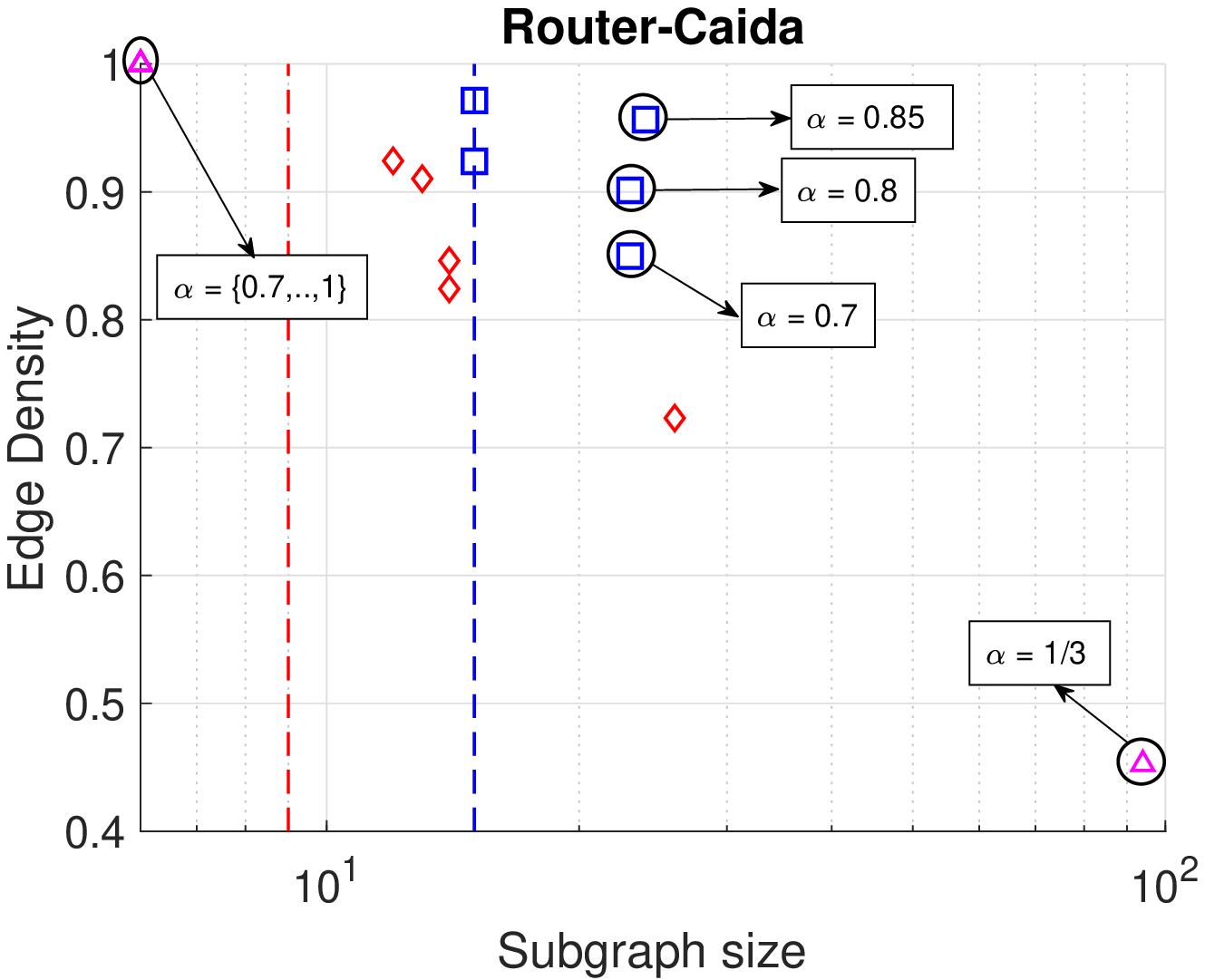}
    \caption{\footnotesize Edge density versus subgraph size for four real-world graphs as a function of the parameter $\alpha$ used in \textsc{localSearchOQC} and \textsc{greedyOQC}. The {\color{red} red diamonds} denote the neighborhood subgraphs selected using the seeding strategy (\textbf{S2}), the {\color{red} red} vertical line highlights the size of the largest ego-clique, the {\color{blue} blue squares} denote the subgraphs obtained using \textsc{localSearchOQC} with seeds (\textbf{S2}), the {\color{blue} blue} vertical line marks the size of the largest clique obtained using \textsc{localSearchOQC} with seeds (\textbf{S1}), and the {\color{magenta} magenta triangles} denote the output of \textsc{greedyOQC}.}  
    \label{fig:S2}
\end{figure}

On the \textsc{blogCatalog3} dataset, in terms of size and edge-density, the quasi-cliques computed by \textsc{localSearchOQC} are a close match to those computed by \textsc{greedyOQC} for a given $\alpha$.  On the other hand, on the \textsc{loc-Gowalla} graph, it can be noted that the initial seed sets themselves are large quasi-cliques. In this case, further refinement using \textsc{localSearchOQC} does not result in a significant improvement, although it does identify a near-clique on $32$ vertices. In comparison, the largest clique detected by \textsc{greedyOQC} is only marginally larger than the largest ego-clique, and is much smaller than the largest clique recovered by \textsc{localSearchOQC}.
On the \textsc{email-Enron} graph, we observe the opposite trend, i.e.,  \textsc{localSearchOQC} produces dense quasi-cliques of smaller size compared to \textsc{greedyOQC} overall. On the
\textsc{router-Caida} graph, we made a curious observation regarding \textsc{greedyOQC} - the subgraph produced is invariant with respect to all choices of $\alpha > 1/3$. In this case, the algorithm completely fails to unveil any dense quasi-cliques, while \textsc{localSearchOQC} discovers a $0.95$-quasi-clique on $24$ vertices. Furthermore, it can be seen that the clique computed by \textsc{greedyOQC} is of size $6$, which is smaller than \emph{both} the largest ego-clique and the largest clique computed by \textsc{localSearchOQC}. 
\begin{figure}[t]
    \centering
    \includegraphics[width = 0.3\textwidth]{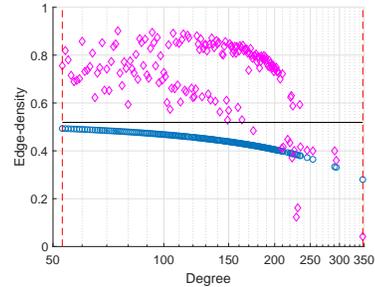}
    \caption{\footnotesize Lower bound of Theorem 3.5 ({\color{blue} blue}) vs actual neighborhood edge-density ({\color{magenta} magenta}) as a function of the degree for the \textsc{Facebook-A} graph. Black line -- $C_g$, {\color{red} red} lines -- admissible range of degrees.}
    \label{fig:bound}
\end{figure}
Finally, we compare the lower bound on the neighborhood edge-density derived in Theorem 3.5 against its actual value for the \textsc{Facebook-A} graph in Figure \ref{fig:bound}. We chose this particular dataset as it has a large value of $C_g = 0.52$, and its degree distribution closely conforms with our assumptions (C1)--(C2). Note that for a fixed $C_g$, the lower bound $(C_g - \beta)/(1-\beta)$ decreases monotonically with $\beta \in (d_{\min}/d_{\max},C_g)$. We plot the value of this lower bound for every unique degree in the graph that lies between a fraction $\beta_{\min} = 0.05$ and $\beta_{\max} = C_g$ of the largest degree $d_{\max} = 1,045$, and also plot the largest clustering coefficient $C_v$ (i.e., the actual neighborhood edge-density) for every such degree. The figure reveals that our lower bound is pessimistic in general, although it becomes tighter for larger degrees. A very small number of neighborhoods of large degree also violate the lower bound, which we attribute to the fact that there are missing degrees in practice and that the degree distribution approximately obeys a power-law with exponent $2$.

\end{document}